\pdfoutput=1
\documentclass{article}
\usepackage{arxiv}

\usepackage{graphicx}
\usepackage[utf8]{inputenc}
\usepackage[T1]{fontenc}
\usepackage{url}
\usepackage{amsfonts}
\usepackage{microtype}
\usepackage{mathptmx}
\usepackage{amsmath}
\usepackage{amsthm}
\usepackage{amssymb}
\usepackage{enumerate}
\usepackage{bussproofs}
\usepackage{booktabs}
\usepackage{latexsym}
\usepackage[hidelinks]{hyperref}
\hypersetup{pdftitle={A Lambda Calculus Based on Quantum Logic}, pdfauthor={Kenji Tokuo}}

\newtheoremstyle{thm-break}
  {10pt}
  {3pt}
  {\normalfont}
  {}
  {\bfseries}
  {}
  {\newline}
  {\thmname{#1}\thmnumber{ #2}\thmnote{ (#3)}\par\vspace{0.5\baselineskip}}

\theoremstyle{thm-break}

\newtheorem{theorem}{Theorem}[section]
\newtheorem{lemma}[theorem]{Lemma}
\newtheorem{proposition}[theorem]{Proposition}
\newtheorem{corollary}[theorem]{Corollary}
\newtheorem{definition}[theorem]{Definition}
\newtheorem{example}[theorem]{Example}
\newtheorem{remark}[theorem]{Remark}

\newcommand{\FV}{\mathrm{FV}}
\newcommand{\QuLam}{\textnormal{QuLam}}
\providecommand{\llbracket}{[\![}
\providecommand{\rrbracket}{]\!]}

\title{A Lambda Calculus Based on Quantum Logic}
\author{Kenji Tokuo\\
Department of Information Engineering, Oita College\\
National Institute of Technology, Oita 870-0152, Japan\\
\texttt{tokuo@oita-ct.ac.jp}}
\date{}

\begin{document}

\maketitle

\begin{abstract}
We propose a typed lambda calculus based on quantum logic. We first present a natural deduction system for quantum logic with the Sasaki hook as its implication operation and then formulate a typed lambda calculus connected to this system through the Curry–Howard correspondence. We prove soundness and completeness of the former with respect to orthomodular lattices and strong normalization of the latter. The resulting lambda calculus has the characteristic feature that only one assumption is available in the typing derivation that serves as the premise of $\lambda$-abstraction. Under this restriction, ordinary currying is not available in general, but it is still possible to perform $\lambda$-abstraction by combining multiple assumptions into a single argument of product type and to construct a term corresponding to a deduction using the Sasaki projection.
\end{abstract}

\noindent\textbf{Keywords:} quantum logic, natural deduction, typed lambda calculus, Curry--Howard correspondence, strong normalization

\section{Introduction}\label{sec:intro}

\subsection{Background}
Quantum logic originates with Birkhoff and von Neumann~\cite{Birk1936}. Orthomodular lattices are known as an algebraic semantics for quantum logic. Since implication in quantum logic cannot in general satisfy a satisfactory set of properties, the Sasaki hook is often used as a second-best candidate for implication.

Nishimura developed sequent systems for quantum logic~\cite{Nish1980,Nish2009}. Based on these systems, a natural deduction system for quantum logic and an associated typed lambda calculus were introduced in~\cite{Toku2022}. Another natural deduction system for quantum logic, SOM~\cite{Korn2024}, uses ordered sequences of assumptions, and the deduction theorem for the Sasaki hook holds for these ordered sequences.

Meanwhile, lambda calculi for quantum computation have also been studied. Van Tonder proposed a quantum lambda calculus closely related to linear lambda calculus as a computational model for describing quantum algorithms~\cite{Vant2004}. Selinger and Valiron formulated a lambda calculus that handles quantum data under classical control and provided a type system based on affine intuitionistic linear logic~\cite{Seli2005}. Altenkirch and Grattage also introduced the functional quantum programming language QML based on linear logic~\cite{Alte2005}.

\subsection{Contributions}
We revisit the natural deduction system NQ for quantum logic and the typed lambda calculus QuLam, which we introduced in our previous paper~\cite{Toku2022}, and add the missing rules explicitly. We then establish once again the correspondence between the former system and Nishimura's sequent calculus GOM for quantum logic~\cite{Nish1980}, together with the resulting soundness and completeness theorem, and prove strong normalization of QuLam.

The types of QuLam correspond to formulas of quantum logic, and its terms correspond to proofs in quantum logic. In QuLam, there is a restriction that the body of a $\lambda$-abstraction may depend only on its bound argument. Although this restriction prevents currying in general, we show that it is still possible to perform $\lambda$-abstraction by combining multiple assumptions into a single product argument and to construct terms corresponding to deductions based on the Sasaki projection. We also treat negative inference, double negation, and orthomodular lattice semantics.

\subsection{Organization}
The paper is organized as follows. Section~\ref{sec:preliminaries} establishes the formal foundations of NQ and QuLam. This includes soundness and completeness of NQ in Section~\ref{sec:NQ-revision} and strong normalization of QuLam in Section~\ref{sec:lambda}. Section~\ref{sec:CH} describes the correspondence between deductions and terms. Section~\ref{sec:semantics} develops the orthomodular lattice semantics of QuLam. Section~\ref{sec:control} examines the structure of terms in QuLam. Section~\ref{sec:conclusion} concludes the paper.

\section{Formal Foundations}
\label{sec:preliminaries}

\subsection{Orthomodular Lattices}
\label{sec:orthomodular-logic}

We briefly review orthomodular lattices, which provide the semantics for quantum logic. For further details, see standard references such as Kalmbach~\cite{Kalm1983}.

Let $(L,\leq)$ be a partially ordered set. We say that $L$ is a lattice if, for any $a,b \in L$, the join $a \vee b$ and the meet $a \wedge b$ exist. A lattice $L$ is said to be bounded if it has a least element $0$ and a greatest element $1$. For a bounded lattice $L$, suppose that a map $\perp : L \to L$ satisfies the following conditions for all $a,b \in L$:
\begin{enumerate}[(i)]
  \item $a \vee a^{\perp} = 1$ and $a \wedge a^{\perp} = 0$.
  \item $(a^{\perp})^{\perp} = a$.
  \item If $a \leq b$, then $b^{\perp} \leq a^{\perp}$.
\end{enumerate}
Then $(L,\leq,\perp)$ is called an \emph{ortholattice}. Here $a^{\perp}$ denotes the orthocomplement of $a$, which is interpreted as the quantum-logical negation of the proposition $a$. An ortholattice $(L,\leq,\perp)$ is called an \emph{orthomodular lattice} if it further satisfies the following orthomodularity condition:
\begin{enumerate}[(i)]
  \setcounter{enumi}{3}
  \item If $a \leq b$, then $b = a \vee (b \wedge a^{\perp})$.
\end{enumerate}

The lattice of closed subspaces of a Hilbert space, ordered by inclusion, is an example of an orthomodular lattice.

In quantum logic, propositions are interpreted as elements of an orthomodular lattice, and the order $a \leq b$ is interpreted as the entailment relation ``$a$ entails $b$.'' For an orthomodular lattice $L$, elements $a,b \in L$ are said to \emph{commute} if the subortholattice of $L$ generated by $a$ and $b$ is Boolean, or equivalently, if $a$ and $b$ belong to a common Boolean subortholattice. On any Boolean subortholattice, the logical connectives behave as in classical logic. In general, orthomodular lattices need not satisfy the distributive law of $\wedge$ over $\vee$.

We introduce two additional operations, which we use below for the interpretation of implication and for the study of terms.

\begin{definition}[Sasaki operations]\label{def:sasaki-operations}
Let $L$ be an orthomodular lattice and let $a,b\in L$. We define the \emph{Sasaki hook} as $a\rightarrow_S b:=a^\perp\vee(a\wedge b)$ and the \emph{Sasaki projection} of $b$ onto $a$ as $\varphi_a(b):=a\wedge(a^\perp\vee b)$.
\end{definition}

The Sasaki hook has been studied in the context of quantum implication~\cite{Coec2004,Hard1979,Hard1981}.

The two operations satisfy the following adjunction.

\begin{lemma}[Sasaki adjunction]\label{lem:sasaki-adjunction}
For all $a,b,c$ in an orthomodular lattice, $\varphi_a(c)\leq b$ if and only if $c\leq a\rightarrow_S b$.
\end{lemma}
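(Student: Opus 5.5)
The proof will go through the standard route via the two monotone maps $\varphi_a$ and $a\rightarrow_S(-)$. By the general theory of Galois connections between posets, it suffices to establish four facts: both maps are monotone, $\varphi_a(a\rightarrow_S b)\leq b$ (counit), and $c\leq a\rightarrow_S \varphi_a(c)$ (unit). Indeed, if $\varphi_a(c)\leq b$, then monotonicity of $a\rightarrow_S(-)$ together with the unit gives $c\leq a\rightarrow_S\varphi_a(c)\leq a\rightarrow_S b$. Conversely, if $c\leq a\rightarrow_S b$, then monotonicity of $\varphi_a$ together with the counit gives $\varphi_a(c)\leq\varphi_a(a\rightarrow_S b)\leq b$.

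Monotonicity is immediate from the definitions $\varphi_a(c)=a\wedge(a^\perp\vee c)$ and $a\rightarrow_S b=a^\perp\vee(a\wedge b)$, since $\wedge$ and $\vee$ are monotone in each argument.

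For the counit, I will compute
\[
\varphi_a(a\rightarrow_S b)=a\wedge\bigl(a^\perp\vee a^\perp\vee(a\wedge b)\bigr)=a\wedge\bigl(a^\perp\vee(a\wedge b)\bigr).
\]
Set $x:=a\wedge b\leq a$. Orthomodularity (iv), applied to $x\leq a$, gives $a=x\vee(a\wedge x^\perp)$. Dually, taking orthocomplements of the equivalent form of (iv) gives $a\wedge(a^\perp\vee x)=x$ whenever $x\leq a$. To verify this, apply (iv) to $a^\perp\leq x^\perp$ to get $x^\perp=a^\perp\vee(x^\perp\wedge a)$, and then take complements using the De Morgan laws, which hold in any ortholattice by (ii) and (iii). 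Hence $\varphi_a(a\rightarrow_S b)=a\wedge b\leq b$.

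For the unit, I will compute
\[
a\rightarrow_S\varphi_a(c)=a^\perp\vee\bigl(a\wedge a\wedge(a^\perp\vee c)\bigr)=a^\perp\vee\bigl(a\wedge(a^\perp\vee c)\bigr).
\]
Set $y:=a^\perp\vee c$, so that $a^\perp\leq y$. Orthomodularity (iv), applied to $a^\perp\leq y$, gives $y=a^\perp\vee(y\wedge a)$. So $a\rightarrow_S\varphi_a(c)=a^\perp\vee c\geq c$.

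The only real obstacle is getting the orthomodular law into the exact form needed in each of the two computations. The unit uses (iv) literally. The counit uses its order-dual, which is where the De Morgan step is needed; I will state that step as a small preliminary remark and then use it directly.
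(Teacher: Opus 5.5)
Your proof is correct and follows essentially the same route as the paper. The paper's forward direction is your unit computation $a^\perp\vee c=a^\perp\vee\varphi_a(c)$, obtained from orthomodularity applied to $a^\perp\leq a^\perp\vee c$, and its converse is your monotonicity-plus-counit argument $\varphi_a(c)\leq\varphi_a(a\rightarrow_S b)=a\wedge b\leq b$; the Galois-connection framing and your De Morgan derivation of the dual orthomodular identity only make explicit what the paper states tersely.
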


\begin{proof}
Suppose $\varphi_a(c)\leq b$. Applying the orthomodularity condition to $a^\perp\leq a^\perp\vee c$, we obtain $a^\perp\vee c=a^\perp\vee\varphi_a(c)$. Since $\varphi_a(c)\leq a\wedge b$, we conclude $c\leq a^\perp\vee\varphi_a(c)\leq a\rightarrow_S b$.

Conversely, suppose $c\leq a\rightarrow_S b$. The projection $\varphi_a$ is monotone, and we obtain $a\wedge(a^\perp\vee(a\wedge b))=a\wedge b$ from the orthomodularity condition. Therefore $\varphi_a(c)\leq\varphi_a(a\rightarrow_S b)=a\wedge b\leq b$.
\end{proof}

We use $\rightarrow_S$ for the operation on lattice elements and $\rightarrow$ for the logical connective of NQ and the type constructor of QuLam.

\subsection{Natural Deduction for Quantum Logic}
\label{sec:NQ}
In this subsection, we redefine the natural deduction system NQ for quantum logic, which we presented in our previous paper~\cite{Toku2022}, with the rules supplemented. The language of NQ consists of countably many propositional variables, the constant $\bot$, conjunction $\wedge$, and primitive implication $\rightarrow$. Negation $\neg\alpha$ abbreviates $\alpha\rightarrow\bot$. Assumptions carry labels $u:\alpha$. Contexts in proof figures are finite sets of labeled assumptions, with each label assigned a unique formula. An occurrence $[u:\alpha]$ is discharged by the rule bearing the label $u$. When $\Gamma$ is a set of formulas, $\Gamma\vdash_{\mathrm{NQ}}\alpha$ means that $\alpha$ can be derived in NQ using only formulas in $\Gamma$ as undischarged assumptions. Each assumption is assigned a distinct label as needed. The set $\Gamma$ may contain unused assumptions.

\begin{definition}[Natural deduction system NQ]\label{def:NQ-full}
Proof figures of NQ are generated by the following rules. Here, renamings of discharged assumption labels are understood in the usual way. The schematic letters $\mathcal D,\mathcal D_1,\mathcal D_2$ denote entire premise derivations. The displayed figures provide the final inference of each rule.

\smallskip\noindent\emph{Assumption} (ASM). An occurrence $u:\alpha$ constitutes a derivation of $\alpha$ with that occurrence as its undischarged assumption.
\begin{prooftree}
\AxiomC{$u:\alpha$}
\end{prooftree}

\smallskip\noindent\emph{Ex falso} (EFQ).
\begin{prooftree}
\AxiomC{$\mathcal D$}
\noLine\UnaryInfC{$\bot$}
\RightLabel{$\mathrm{EFQ}$}
\UnaryInfC{$\alpha$}
\end{prooftree}

\smallskip\noindent\emph{Conjunction introduction and elimination.}
\begin{prooftree}
\AxiomC{$\mathcal D_1$}\noLine\UnaryInfC{$\alpha$}
\AxiomC{$\mathcal D_2$}\noLine\UnaryInfC{$\beta$}
\RightLabel{$\wedge\mathrm I$}\BinaryInfC{$\alpha\wedge\beta$}
\end{prooftree}
\begin{prooftree}
\AxiomC{$\mathcal D$}\noLine\UnaryInfC{$\alpha\wedge\beta$}
\RightLabel{$\wedge\mathrm E_l$}\UnaryInfC{$\alpha$}
\end{prooftree}
\begin{prooftree}
\AxiomC{$\mathcal D$}\noLine\UnaryInfC{$\alpha\wedge\beta$}
\RightLabel{$\wedge\mathrm E_r$}\UnaryInfC{$\beta$}
\end{prooftree}

\smallskip\noindent\emph{Implication introduction and elimination.}
\begin{prooftree}
\AxiomC{$[u:\alpha]$}\noLine\UnaryInfC{$\vdots$}\noLine\UnaryInfC{$\beta$}
\RightLabel{$({\rightarrow}\mathrm I)^u$}\UnaryInfC{$\alpha\rightarrow\beta$}
\end{prooftree}
\begin{prooftree}
\AxiomC{$\mathcal D_1$}\noLine\UnaryInfC{$\alpha\rightarrow\beta$}
\AxiomC{$\mathcal D_2$}\noLine\UnaryInfC{$\alpha$}
\RightLabel{${\rightarrow}\mathrm E$}\BinaryInfC{$\beta$}
\end{prooftree}
The premise derivation above $({\rightarrow}\mathrm I)^u$ must have no undischarged assumptions other than $u:\alpha$. Since the application of ${\rightarrow}\mathrm I$ discharges $u:\alpha$, the derivation of the conclusion $\alpha\rightarrow\beta$ has no undischarged assumptions. This remains true when the derivation occurs inside a larger proof.

\smallskip\noindent\emph{Modus tollens} (MT).
\begin{prooftree}
\AxiomC{$\mathcal D_1$}\noLine\UnaryInfC{$\alpha\rightarrow\beta$}
\AxiomC{$\mathcal D_2$}\noLine\UnaryInfC{$\neg\beta$}
\RightLabel{$\mathrm{MT}$}\BinaryInfC{$\neg\alpha$}
\end{prooftree}
The derivation $\mathcal D_1$ must have no undischarged assumptions. The conclusion inherits the undischarged assumptions of $\mathcal D_2$.

\smallskip\noindent\emph{Double-negation introduction and elimination.}
\begin{prooftree}
\AxiomC{$\mathcal D$}\noLine\UnaryInfC{$\alpha$}
\RightLabel{$\neg\neg\mathrm I$}\UnaryInfC{$\neg\neg\alpha$}
\end{prooftree}
\begin{prooftree}
\AxiomC{$\mathcal D$}\noLine\UnaryInfC{$\neg\neg\alpha$}
\RightLabel{$\neg\neg\mathrm E$}\UnaryInfC{$\alpha$}
\end{prooftree}

\smallskip\noindent\emph{Sasaki hook introduction and elimination.}
\begin{prooftree}
\AxiomC{$\mathcal D$}\noLine\UnaryInfC{$\neg(\alpha\wedge\neg(\alpha\wedge\beta))$}
\RightLabel{${\rightarrow}\mathrm I_S$}\UnaryInfC{$\alpha\rightarrow\beta$}
\end{prooftree}
\begin{prooftree}
\AxiomC{$\mathcal D$}\noLine\UnaryInfC{$\alpha\rightarrow\beta$}
\RightLabel{${\rightarrow}\mathrm E_S$}\UnaryInfC{$\neg(\alpha\wedge\neg(\alpha\wedge\beta))$}
\end{prooftree}
The last two rules preserve all undischarged assumptions of their premise derivations. They are distinct from the restricted rule ${\rightarrow}\mathrm I$ and from modus ponens ${\rightarrow}\mathrm E$.
\end{definition}

\begin{remark}[Relation to the earlier rules]\label{rem:NQ-correction}
The rules of NQ above, except ${\rightarrow}\mathrm I_S$ and ${\rightarrow}\mathrm E_S$, already appeared in our previous paper~\cite{Toku2022}. In the proof of Theorem~3.6 of~\cite{Toku2022}, the case corresponding to the O-modular rule of GOM implicitly identified the expression $\neg(\alpha\wedge\neg(\alpha\wedge\beta))$ with the primitive implication $\alpha\rightarrow\beta$. The rule ${\rightarrow}\mathrm I_S$ explicitly formulates the inference used in this part as a rule. The converse rule ${\rightarrow}\mathrm E_S$ is needed to work with formulas containing primitive implication as a subformula. We continue to use the name NQ for the system with these rules added.
\end{remark}

The logical connectives of NQ are interpreted through the operations of orthomodular lattices.

\begin{definition}[Interpretation of formulas]\label{def:formula-interpretation}
Let $L$ be an orthomodular lattice. A \emph{valuation} in $L$ is a map $\nu$ from propositional variables to $L$. We extend it to formulas by $\llbracket p\rrbracket_\nu=\nu(p)$, $\llbracket\bot\rrbracket_\nu=0$, $\llbracket\alpha\wedge\beta\rrbracket_\nu=\llbracket\alpha\rrbracket_\nu\wedge\llbracket\beta\rrbracket_\nu$, and $\llbracket\alpha\rightarrow\beta\rrbracket_\nu=\llbracket\alpha\rrbracket_\nu\rightarrow_S\llbracket\beta\rrbracket_\nu$. For a finite set $\Gamma$ of formulas, $\llbracket\Gamma\rrbracket_\nu$ is the meet of the values of its elements, with $\llbracket\varnothing\rrbracket_\nu=1$. For a finite context of labeled assumptions, we use the same notation for the meet of the values of its underlying formulas, independently of the labels.
\end{definition}

\subsection{Soundness and Completeness}\label{sec:NQ-revision}
Nishimura's GOM is a sequent calculus for quantum logic~\cite{Nish1980}. GOM adopts conjunction and negation as primitive logical connectives and is sound and complete with respect to orthomodular lattice semantics. A sequent of GOM is an expression of the form $\Gamma\vdash\Delta$, where $\Gamma$ and $\Delta$ are finite sets of formulas. Besides the structural rules for sequents and the rules for conjunction and negation, GOM has the following O-modular rule.

\begin{prooftree}
\AxiomC{$\neg\beta\vdash\neg\alpha$}
\AxiomC{$\neg\alpha,\beta\vdash$}
\RightLabel{$\mathrm{O\text{-}modular}$}
\BinaryInfC{$\neg\alpha\vdash\neg\beta$}
\end{prooftree}

In GOM, the Sasaki hook $\alpha\rightarrow\beta$ is used as an abbreviation for $\neg(\alpha\wedge\neg(\alpha\wedge\beta))$. We define the translation $J$ from GOM formulas to NQ formulas as follows:
\begin{itemize}
\item $J(p)=p$.
\item $J(\alpha\wedge\beta)=J(\alpha)\wedge J(\beta)$.
\item $J(\neg\alpha)=J(\alpha)\rightarrow\bot$.
\end{itemize}
Under this translation $J$, a contradiction $\delta\wedge\neg\delta$ representing $\bot$ in GOM is translated as $J(\delta)\wedge\neg J(\delta)$, which is interderivable with $\bot$ in NQ.

Conversely, we fix a propositional variable $q$ and define the translation $H$ from NQ formulas to GOM formulas as follows:
\begin{itemize}
\item $H(p)=p$.
\item $H(\bot)=q\wedge\neg q$.
\item $H(\alpha\wedge\beta)=H(\alpha)\wedge H(\beta)$.
\item $H(\alpha\rightarrow\beta)=\neg(H(\alpha)\wedge\neg(H(\alpha)\wedge H(\beta)))$.
\end{itemize}
Since NQ uses $\neg\alpha$ as an abbreviation for $\alpha\rightarrow\bot$, the image of $\neg\alpha$ under $H$ is generally not literally $\neg H(\alpha)$. These GOM formulas are nevertheless interderivable. By Propositions~3.3 and~3.4 of~\cite{Toku2022}, all contradiction formulas are interderivable, and the Sasaki hook with a contradictory consequent is interderivable with negation. The conjunction and negation rules of GOM, together with cut, permit these equivalences to be used inside larger formulas. Thus the translated formulas are interderivable with the corresponding GOM expressions used in Theorem~3.5 of~\cite{Toku2022}. Moreover, $H(\alpha)$ and $\alpha$ have the same value under every corresponding valuation in every orthomodular lattice, since $(a\wedge(a\wedge b)^\perp)^\perp=a\rightarrow_S b$ and $q\wedge q^\perp=0$.

\begin{lemma}[Admissibility in GOM]\label{lem:NQ-to-GOM}
If $\Gamma\vdash_{\mathrm{NQ}}\alpha$, then $H(\Gamma)\vdash H(\alpha)$ is derivable in GOM.
\end{lemma}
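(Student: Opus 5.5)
We argue by induction on the structure of the NQ derivation of $\alpha$ from $\Gamma$, proving the stronger statement that if a derivation has conclusion $\alpha$ and its undischarged assumptions are among $\Gamma$, then $H(\Gamma)\vdash H(\alpha)$ is derivable in GOM. Since $\Gamma$ may contain unused assumptions, weakening in GOM is used throughout. The unrestricted contexts of $\wedge\mathrm I$ and ${\rightarrow}\mathrm E$ are likewise handled by weakening each premise sequent to the union of the contexts. Abbreviate $H(\alpha)=a$ and $H(\beta)=b$, and write $a\supset b$ for $\neg(a\wedge\neg(a\wedge b))$, so that $H(\alpha\rightarrow\beta)=a\supset b$.

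\textbf{Easy cases.} The cases ASM, $\wedge\mathrm I$, $\wedge\mathrm E_l$ and $\wedge\mathrm E_r$ are immediate from the axiom and the conjunction rules of GOM together with cut. The cases ${\rightarrow}\mathrm I_S$ and ${\rightarrow}\mathrm E_S$ are also immediate, because the premise and the conclusion translate to GOM formulas that are interderivable. For instance, $H(\neg(\alpha\wedge\neg(\alpha\wedge\beta)))$ is built from $H(\neg\gamma)=H(\gamma)\supset(q\wedge\neg q)$. By the interderivabilities cited from Propositions~3.3 and~3.4 of~\cite{Toku2022}, this is equivalent to $\neg H(\gamma)$, and these equivalences are propagated inside contexts by the conjunction and negation rules and cut. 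Hence $H(\neg(\alpha\wedge\neg(\alpha\wedge\beta)))$ is interderivable with $a\supset b$, and cut finishes both cases.

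\textbf{Negation-style cases.} These are handled using the same replacement of $H(\neg\gamma)$ by $\neg H(\gamma)$.
\begin{itemize}
\item EFQ: from $H(\Gamma)\vdash q\wedge\neg q$, the sequent $q\wedge\neg q\vdash a$ follows by the negation rules and weakening, and cut gives $H(\Gamma)\vdash a$.
\item $\neg\neg\mathrm I$ and $\neg\neg\mathrm E$: these reduce to $a\vdash\neg\neg a$ and $\neg\neg a\vdash a$, which hold in GOM.
\item ${\rightarrow}\mathrm E$: from $H(\Gamma_1)\vdash a\supset b$ and $H(\Gamma_2)\vdash a$ we must obtain $b$. It suffices to show $a\supset b,\ a\vdash b$. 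This holds in orthomodular lattices, since $a\wedge(a^\perp\vee(a\wedge b))=a\wedge b$. In GOM it follows from the O-modular rule applied with $\neg a$ and $a\wedge b$, essentially as in Theorem~3.5 of~\cite{Toku2022}. We would take that derivation as given and then apply cut.
\item MT: the premise $\alpha\rightarrow\beta$ is closed, so we have $\vdash a\supset b$. Combined with modus ponens this gives $a\vdash b$, hence $\neg b\vdash\neg a$ by the negation rules. Cutting against $H(\Gamma)\vdash\neg b$ (after the negation replacement) yields $\neg a$.
\end{itemize}

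\textbf{Main obstacle: ${\rightarrow}\mathrm I$.} The restriction on this rule is exactly what makes it work. The premise derivation has only $u:\alpha$ undischarged, so the induction hypothesis gives $a\vdash b$, and hence $a\vdash a\wedge b$. We must derive $\vdash a\supset b$, i.e.\ $a\wedge\neg(a\wedge b)\vdash$.
\begin{enumerate}[(1)]
\item From $a\vdash a\wedge b$, contraposition gives $\neg(a\wedge b)\vdash\neg a$.
\item Conjoining with $a$ yields $a\wedge\neg(a\wedge b)\vdash a\wedge\neg a$, whose succedent is refutable.
\end{enumerate}
This uses only the conjunction and negation rules and cut, so the O-modular rule is not needed here. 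The conclusion of ${\rightarrow}\mathrm I$ has no undischarged assumptions, and weakening then gives $H(\Gamma)\vdash a\supset b$.

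The genuinely delicate points are therefore (a) ${\rightarrow}\mathrm E$ (and hence MT), which needs the O-modular rule, and (b) the bookkeeping showing that the equivalence between $H(\neg\gamma)$ and $\neg H(\gamma)$ is substitutable inside larger formulas. For (b) we would prove a substitution lemma by induction on formula contexts built from $\wedge$ and $\neg$. It uses that GOM's negation rules are contravariant, which lets us pass equivalences under $\neg$.
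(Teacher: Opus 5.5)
Your proposal is correct and follows the paper's route: induction on the NQ derivation, with ${\rightarrow}\mathrm I_S$ and ${\rightarrow}\mathrm E_S$ reduced to a cut against an interderivability. The remaining rules, which the paper simply cites from Theorem~3.5 of~\cite{Toku2022}, you handle by weakening, contraposition and the O-modular modus ponens, and your explicit ${\rightarrow}\mathrm I$ case correctly shows that the single-assumption restriction is what makes $\vdash a\supset b$ derivable. The one difference is that the paper gets the interderivability for the Sasaki-hook rules semantically (equal values in every orthomodular lattice plus soundness and completeness of GOM), whereas you derive it syntactically from the equivalence of $H(\neg\gamma)$ with $\neg H(\gamma)$ and its congruence under $\wedge$ and $\neg$; either works.
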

\begin{proof}
We proceed by induction on an NQ derivation of $\alpha$ from $\Gamma$.

Base case. An assumption $\alpha$ is translated into the initial sequent $H(\alpha)\vdash H(\alpha)$ of GOM. By weakening, we obtain $H(\Gamma)\vdash H(\alpha)$.

Induction step. We proceed by cases according to the last rule applied.

\begin{enumerate}[(i)]
\item Case Sasaki hook. For either rule for the Sasaki hook, the premise and conclusion have equal values in every orthomodular lattice. Therefore, by the soundness and completeness of GOM~\cite{Nish1980}, the formulas obtained by translating the premise and conclusion with $H$ are interderivable in GOM. In particular, since the translated conclusion can be derived from the translated premise, we obtain a derivation of the conclusion by applying cut to the derivation of the premise.

\item Case remaining rules. These cases are established in Theorem~3.5 of~\cite{Toku2022}. By the preceding equivalences, we can use the translation $H$ in place of the abbreviated GOM expressions in that proof.\qedhere
\end{enumerate}
\end{proof}

We next turn to the converse translation. The O-modular rule requires ${\rightarrow}\mathrm I_S$.

\begin{lemma}[Translation of normal GOM sequents]\label{lem:GOM-to-NQ}
A GOM sequent $\Gamma\vdash\Delta$ is called \emph{normal} when $\Delta$ contains at most one formula. Suppose a normal sequent $\Gamma\vdash\Delta$ is derivable in GOM by a derivation consisting of normal sequents. If $\Delta=\{\alpha\}$, then $J(\alpha)$ is derivable in NQ from the assumptions $J(\Gamma)$. If $\Delta=\varnothing$, then $\bot$ is derivable in NQ from $J(\Gamma)$.
\end{lemma}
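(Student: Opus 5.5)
We argue by induction on the GOM derivation of $\Gamma\vdash\Delta$, which by hypothesis consists only of normal sequents, and split into cases according to the last rule applied. For uniformity we read $\Delta=\varnothing$ as the succedent $\bot$. The claim to be shown at each step is then $J(\Gamma)\vdash_{\mathrm{NQ}}J^\ast(\Delta)$, where $J^\ast(\{\alpha\})=J(\alpha)$ and $J^\ast(\varnothing)=\bot$. Since contexts of NQ may contain unused assumptions, weakening on the left is immediate. Weakening on the right, from $\Gamma\vdash$ to $\Gamma\vdash\alpha$, is handled by EFQ. Initial sequents are handled by ASM. For cut, we substitute the derivation of the cut formula for the corresponding open assumption occurrences. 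This is legitimate because the only rule with a side condition on assumptions, ${\rightarrow}\mathrm I$, requires its premise to have no open assumptions besides the discharged one. Any occurrence of the cut assumption therefore lies outside such subderivations, or is itself discharged there. MT likewise requires only its first premise to be closed. Hence substitution preserves well-formedness.

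The conjunction rules translate directly to $\wedge\mathrm I$ and $\wedge\mathrm E_{l/r}$. The negation rules need more care, because NQ lacks unrestricted ${\rightarrow}\mathrm I$.

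\emph{Left negation.} From $\Gamma\vdash\alpha$ to $\neg\alpha,\Gamma\vdash$, we apply ${\rightarrow}\mathrm E$ to the assumption $J(\alpha)\rightarrow\bot$ and the translated derivation of $J(\alpha)$.

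\emph{Right negation.} From $\alpha,\Gamma\vdash$ to $\Gamma\vdash\neg\alpha$, we cannot simply discharge $\alpha$ when $\Gamma\neq\varnothing$. In GOM, however, right negation is restricted so that the context is a single negated formula or empty, since otherwise orthomodular soundness fails. Concretely, the rule takes the form $\alpha\vdash\ \Rightarrow\ \vdash\neg\alpha$, or a contraposition $\alpha\vdash\beta\ \Rightarrow\ \neg\beta\vdash\neg\alpha$. I would check Nishimura's exact formulation and handle each variant.

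In the empty-context form, ${\rightarrow}\mathrm I$ applies directly. In the contraposition form, the induction hypothesis gives a derivation of $J(\beta)$ from the single assumption $u:J(\alpha)$. Applying ${\rightarrow}\mathrm I^u$ yields a closed derivation of $J(\alpha)\rightarrow J(\beta)$, and MT with the assumption $\neg J(\beta)$ gives $\neg J(\alpha)$. Rules involving double negation are handled by $\neg\neg\mathrm I$ and $\neg\neg\mathrm E$.

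\emph{O-modular.} This is the expected main obstacle. From the premises $\neg\beta\vdash\neg\alpha$ and $\neg\alpha,\beta\vdash$ we must derive $\neg J(\beta)$ from the assumption $\neg J(\alpha)$. The semantic content is this: from $b^\perp\le a^\perp$ and $a^\perp\wedge b=0$, orthomodularity gives $a^\perp=b^\perp$. The plan is to pass through the Sasaki hook.

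The first step uses the second premise. Taking $\wedge\mathrm E$ on the assumption $J(\beta)\wedge\neg(J(\beta)\wedge\neg J(\alpha))$ and applying the induction hypothesis to the second premise, I expect to derive $\bot$ from the single assumption $J(\beta)\wedge\neg(J(\beta)\wedge \neg J(\alpha))$. The argument is to obtain $\neg J(\alpha)$ and $J(\beta)$ suitably and then use the translated premise. This will likely require rewriting the premise, via the first premise, into the form $\beta\wedge\neg\alpha\vdash$, and then discharging. By ${\rightarrow}\mathrm I$ this yields the closed formula $\neg(J(\beta)\wedge\neg(J(\beta)\wedge\neg J(\alpha)))$.

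${\rightarrow}\mathrm I_S$ then turns this into $J(\beta)\rightarrow\neg J(\alpha)$, closed. Combining it with $\neg\neg J(\alpha)$ from $\neg\neg\mathrm I$ via MT, we obtain $\neg J(\beta)$. The open assumption here is $\neg J(\alpha)$, as required.

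I anticipate that the delicate point is producing the closed derivation inside the Sasaki-hook formula, which is exactly where Remark~\ref{rem:NQ-correction} says the rule ${\rightarrow}\mathrm I_S$ is needed. If the direct route fails, I would fall back on the semantic argument used in Lemma~\ref{lem:NQ-to-GOM}, passing through GOM-completeness for closed formulas to justify the required closed derivation.
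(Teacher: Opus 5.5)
\textbf{Gap in the O-modular case.} Your overall plan is viable: a direct induction that handles the O-modular rule head-on through ${\rightarrow}\mathrm I_S$. But the O-modular step as you wrote it targets the wrong Sasaki hook and cannot go through.

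Write $a,b$ for the values of $\alpha,\beta$. The premises give $a\le b$ and $a^\perp\wedge b=0$, hence $a=b$. Under these, your assumption $J(\beta)\wedge\neg(J(\beta)\wedge\neg J(\alpha))$ has value $b\wedge(b\wedge a^\perp)^\perp=b$, so by soundness $\bot$ is not derivable from it. Concretely, for $\alpha=\beta=p$ both premises are derivable, yet $\bot$ does not follow from $p\wedge\neg(p\wedge\neg p)$. Likewise, the closed formula you aim for, $J(\beta)\rightarrow\neg J(\alpha)$, has value $b\rightarrow_S a^\perp=b^\perp$, which is $1$ only when $b=0$.

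The final step also fails on its own terms. MT with a closed $J(\beta)\rightarrow\neg J(\alpha)$ needs the minor premise $\neg\neg J(\alpha)$. Your open assumption is $\neg J(\alpha)$, and applying $\neg\neg\mathrm I$ to it yields $\neg\neg\neg J(\alpha)$, not $\neg\neg J(\alpha)$.

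Your fallback is circular. GOM-completeness only produces a GOM derivation, which you would have to translate into NQ using the very lemma being proved, and NQ-completeness is itself derived from this lemma.

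\textbf{Repair.} Target $J(\beta)\rightarrow J(\alpha)$ instead; its value is $b^\perp\vee(b\wedge a)=1$ since $a=b$.
\begin{enumerate}[(i)]
\item From the first premise, ${\rightarrow}\mathrm I$ gives a closed proof of $\neg J(\beta)\rightarrow\neg J(\alpha)$.
\item From $u:J(\alpha)$, apply $\neg\neg\mathrm I$, then MT with (i), then $\neg\neg\mathrm E$ to get $J(\beta)$. Apply $\wedge\mathrm I$ with $u$ and then ${\rightarrow}\mathrm I^u$ to obtain a closed proof of $J(\alpha)\rightarrow(J(\beta)\wedge J(\alpha))$.
\item From $w:J(\beta)\wedge\neg(J(\beta)\wedge J(\alpha))$, MT of (ii) with $\wedge\mathrm E_r$ gives $\neg J(\alpha)$, and $\wedge\mathrm E_l$ gives $J(\beta)$. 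Substituting these into the translated second premise gives $\bot$ from $w$ alone.
\item Then ${\rightarrow}\mathrm I^w$ followed by ${\rightarrow}\mathrm I_S$ yields a closed proof of $J(\beta)\rightarrow J(\alpha)$. MT with the open assumption $\neg J(\alpha)$ gives $\neg J(\beta)$, as required.
\end{enumerate}

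\textbf{Comparison with the paper.} With this fix your route differs from the paper's. The paper never treats O-modular directly. It first replaces each use by the GOM derivation of Proposition~3.2 of the earlier paper, which uses the derived rule MP. The only new case is then MP, handled by ${\rightarrow}\mathrm I_S$ followed by ${\rightarrow}\mathrm E$. Your corrected version is self-contained and shows exactly where each premise enters. The paper's version is shorter but depends on that earlier replacement derivation.
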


\begin{proof}
We first replace every use of the O-modular rule by the derivation in Proposition~3.2 of~\cite{Toku2022}. This uses the derived rule MP to infer $\Gamma\vdash\beta$ from $\Gamma\vdash\alpha\rightarrow\beta$ and $\Gamma\vdash\alpha$. Note that in GOM the implication in this rule abbreviates $\neg(\alpha\wedge\neg(\alpha\wedge\beta))$. The resulting derivation still consists of normal sequents. For each normal sequent $\Sigma\vdash\Delta$ in this derivation, we then prove that $J(\alpha)$ is derivable in NQ from $J(\Sigma)$ if $\Delta=\{\alpha\}$, and that $\bot$ is derivable in NQ from $J(\Sigma)$ if $\Delta=\varnothing$. We proceed by induction on the number of applications of inference rules in the subderivation ending with each sequent.

Base case. An initial sequent $\alpha\vdash\alpha$ is translated into ASM of NQ.

Induction step. We proceed by cases according to the last rule applied.

\begin{enumerate}[(i)]
\item Case MP. By the induction hypotheses, we obtain NQ derivations from $J(\Sigma)$ of $\neg(J(\alpha)\wedge\neg(J(\alpha)\wedge J(\beta)))$ and $J(\alpha)$. We apply ${\rightarrow}\mathrm I_S$ to the first and ${\rightarrow}\mathrm E$ to the resulting implication and the second derivation. The conclusion is $J(\beta)$.

\item Case remaining rules. We handle the remaining rules for normal sequents using the constructions in Theorem~3.6 of~\cite{Toku2022}. In each case, the translated undischarged assumptions are preserved. An empty succedent corresponds to a derivation of $\bot$. Substitution of derivations for undischarged assumptions, used in cut and the left rules, preserves the restriction on implication introduction in NQ. Moreover, the double-negation left rule can be handled by the present double-negation elimination rule and substitution of derivations.\qedhere
\end{enumerate}

\end{proof}

We next verify that interderivable formulas can be substituted for each other within larger formulas.

\begin{lemma}[Congruence of interderivability]\label{lem:NQ-congruence}
We write $\alpha\simeq\beta$ when $\alpha\vdash_{\mathrm{NQ}}\beta$ and $\beta\vdash_{\mathrm{NQ}}\alpha$. If $\alpha\simeq\alpha'$ and $\beta\simeq\beta'$, then $\neg\alpha\simeq\neg\alpha'$, $\alpha\wedge\beta\simeq\alpha'\wedge\beta'$, and $\alpha\rightarrow\beta\simeq\alpha'\rightarrow\beta'$.
\end{lemma}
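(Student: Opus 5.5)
The plan is to prove the three clauses in a fixed order, negation first, then conjunction, then implication, with each later clause using the earlier ones. The main constraint is the restriction on $({\rightarrow}\mathrm I)$: its premise derivation may depend on no assumption other than the one being discharged. So we cannot simply assume $\alpha\rightarrow\beta$ and $\alpha'$ and then introduce an implication. Throughout, we use the fact, already invoked in the proof of Lemma~\ref{lem:GOM-to-NQ}, that substituting a derivation for an undischarged assumption preserves the restriction. The reason is that the subderivations ending in $({\rightarrow}\mathrm I)$ have no undischarged assumptions, so substitution never reaches inside them. We also note that $\alpha'\vdash_{\mathrm{NQ}}\alpha$ yields a derivation of $\alpha$ whose only undischarged assumptions are occurrences of $\alpha'$. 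Unused assumptions in $\Gamma$ can simply be dropped, and all occurrences of $\alpha'$ can be given one label $u$.

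\emph{Negation.} Take a derivation of $\alpha$ from the single assumption $u:\alpha'$. Apply $({\rightarrow}\mathrm I)^u$ to it; this is legitimate because $u$ is the only undischarged assumption. The result is a closed derivation of $\alpha'\rightarrow\alpha$. Combine it with the assumption $\neg\alpha$ by MT, which is allowed since its first premise has no undischarged assumptions. This yields $\neg\alpha'$ from $\neg\alpha$. The converse is symmetric. This step is where the restriction on implication is actually circumvented: MT lets the open assumption $\neg\alpha$ sit in the second premise only.

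\emph{Conjunction.} From $\alpha\wedge\beta$, obtain $\alpha$ and $\beta$ by $\wedge\mathrm E_l$ and $\wedge\mathrm E_r$. Substitute these into the given derivations of $\alpha'$ from $\alpha$ and of $\beta'$ from $\beta$, and conclude $\alpha'\wedge\beta'$ by $\wedge\mathrm I$. The converse is symmetric.

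\emph{Implication.} We do not attempt this directly; instead we route it through the Sasaki rules. From $\alpha\rightarrow\beta$, apply ${\rightarrow}\mathrm E_S$ to obtain $\neg(\alpha\wedge\neg(\alpha\wedge\beta))$. We then apply the previous two cases repeatedly:
\begin{itemize}
\item by conjunction, $\alpha\wedge\beta\simeq\alpha'\wedge\beta'$;
\item by negation, $\neg(\alpha\wedge\beta)\simeq\neg(\alpha'\wedge\beta')$;
\item by conjunction with $\alpha\simeq\alpha'$, $\alpha\wedge\neg(\alpha\wedge\beta)\simeq\alpha'\wedge\neg(\alpha'\wedge\beta')$;
\item by negation once more, the full formulas $\neg(\alpha\wedge\neg(\alpha\wedge\beta))$ and $\neg(\alpha'\wedge\neg(\alpha'\wedge\beta'))$ are interderivable.
\end{itemize}
Substituting accordingly gives $\neg(\alpha'\wedge\neg(\alpha'\wedge\beta'))$ from $\alpha\rightarrow\beta$, and ${\rightarrow}\mathrm I_S$ then yields $\alpha'\rightarrow\beta'$. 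Both Sasaki rules preserve undischarged assumptions, so the only assumption remains $\alpha\rightarrow\beta$. The converse direction is symmetric.

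The expected obstacle is only the bookkeeping around the restriction on $({\rightarrow}\mathrm I)$. We have to check that each substitution is legitimate and that the sole use of $({\rightarrow}\mathrm I)$, in the negation case, is applied to a derivation with exactly one undischarged assumption label. Once these checks are in place, the rest of the argument is mechanical.
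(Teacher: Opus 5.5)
Your proof is correct and follows the paper's argument. Negation uses a closed proof of the implication built with the restricted ${\rightarrow}\mathrm I$ and then MT; conjunction uses the introduction and elimination rules; implication passes through $\neg(\alpha\wedge\neg(\alpha\wedge\beta))$ via ${\rightarrow}\mathrm E_S$ and ${\rightarrow}\mathrm I_S$ and then applies the first two congruences. Your added bookkeeping, on unifying the labels of $\alpha'$ and on the legitimacy of substituting derivations, makes explicit what the paper leaves implicit.
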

\begin{proof}

We obtain the conjunction claim from $\wedge\mathrm I$, $\wedge\mathrm E_l$, and $\wedge\mathrm E_r$. If $\alpha\vdash_{\mathrm{NQ}}\alpha'$, we obtain a closed proof of $\alpha\rightarrow\alpha'$ by the restricted rule ${\rightarrow}\mathrm I$. By applying MT to this proof and an assumption $\neg\alpha'$, we obtain $\neg\alpha'\vdash_{\mathrm{NQ}}\neg\alpha$. By applying the same argument to $\alpha'\vdash_{\mathrm{NQ}}\alpha$, we obtain the negation claim. By the two rules for the Sasaki hook, $\alpha\rightarrow\beta\simeq\neg(\alpha\wedge\neg(\alpha\wedge\beta))$ holds. Therefore, by the congruence properties for conjunction and negation, we obtain $\alpha\rightarrow\beta\simeq\alpha'\rightarrow\beta'$.
\end{proof}

We use Lemmas~\ref{lem:NQ-to-GOM}--\ref{lem:NQ-congruence} to establish soundness and completeness of NQ.

\begin{theorem}[Soundness and completeness of NQ]\label{thm:NQ-corrected-completeness}
For every finite set of assumptions $\Gamma$ and every formula $\alpha$, $\Gamma\vdash_{\mathrm{NQ}}\alpha$ if and only if $\llbracket\Gamma\rrbracket_\nu\leq\llbracket\alpha\rrbracket_\nu$ for every orthomodular lattice $L$ and every valuation $\nu$ in $L$.
\end{theorem}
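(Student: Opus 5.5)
Both directions go through GOM, using Lemmas~\ref{lem:NQ-to-GOM}--\ref{lem:NQ-congruence} together with Nishimura's soundness and completeness of GOM~\cite{Nish1980}.

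\emph{Soundness.} Suppose $\Gamma\vdash_{\mathrm{NQ}}\alpha$. By Lemma~\ref{lem:NQ-to-GOM}, the sequent $H(\Gamma)\vdash H(\alpha)$ is derivable in GOM. Soundness of GOM then gives $\llbracket H(\Gamma)\rrbracket_\nu\leq\llbracket H(\alpha)\rrbracket_\nu$ for every orthomodular lattice and valuation. We remarked before Lemma~\ref{lem:NQ-to-GOM} that $H(\gamma)$ and $\gamma$ have equal values. The identities needed are $(a\wedge(a\wedge b)^\perp)^\perp=a\rightarrow_S b$ and $q\wedge q^\perp=0$, and I would check this by a short induction on $\gamma$. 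Hence $\llbracket\Gamma\rrbracket_\nu\leq\llbracket\alpha\rrbracket_\nu$. For $\Gamma=\varnothing$, I read the GOM sequent $\vdash H(\alpha)$ as asserting $1\leq\llbracket H(\alpha)\rrbracket_\nu$.

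\emph{Completeness.} Suppose $\llbracket\Gamma\rrbracket_\nu\leq\llbracket\alpha\rrbracket_\nu$ for all $L,\nu$.
\begin{enumerate}[(i)]
\item Pass to GOM via $H$. Since values are preserved, $H(\Gamma)\vdash H(\alpha)$ is valid, so by completeness of GOM it is derivable.
\item Normalize the derivation. I need a derivation consisting only of normal sequents. I expect Nishimura's cut-elimination/normal-form results for GOM to supply one, or else the completeness proof itself (a Lindenbaum construction) yields derivations of single-succedent form. I take this from~\cite{Nish1980}, as the paper's proof of Theorem~3.6 of~\cite{Toku2022} presumably did.
\item Return to NQ. Lemma~\ref{lem:GOM-to-NQ} then gives $J(H(\Gamma))\vdash_{\mathrm{NQ}}J(H(\alpha))$.
\item Show $J(H(\gamma))\simeq\gamma$ for every NQ formula $\gamma$, by induction on $\gamma$.
\begin{itemize}
\item For a variable $p$, $J(H(p))=p$.
\item For $\bot$, $J(H(\bot))=q\wedge\neg q$, which is interderivable with $\bot$: use EFQ in one direction, and ${\rightarrow}\mathrm E$ applied to the components in the other.
\item For conjunction, use Lemma~\ref{lem:NQ-congruence}.
\item For implication, $J(H(\alpha\rightarrow\beta))=\neg(J H\alpha\wedge\neg(J H\alpha\wedge J H\beta))$. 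By the induction hypothesis and Lemma~\ref{lem:NQ-congruence}, this is interderivable with $\neg(\alpha\wedge\neg(\alpha\wedge\beta))$. That formula is interderivable with $\alpha\rightarrow\beta$ by ${\rightarrow}\mathrm I_S$ and ${\rightarrow}\mathrm E_S$.
\end{itemize}
\item Combine. Splice the derivations $\gamma\vdash J H\gamma$ onto the assumptions and $J H\alpha\vdash\alpha$ onto the conclusion. Substituting derivations for undischarged assumptions respects the restriction on ${\rightarrow}\mathrm I$, since closed subproofs stay closed. This yields $\Gamma\vdash_{\mathrm{NQ}}\alpha$.
\end{enumerate}

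The main obstacle is step (ii): Lemma~\ref{lem:GOM-to-NQ} applies only to GOM derivations made entirely of normal sequents. My first attempt would appeal directly to Nishimura's result that GOM is complete for sequents derivable with at most one succedent formula. Failing that, I would establish a normal-derivation lemma directly. Both $\Gamma$ and $\alpha$ are single formulas after taking the conjunction $\bigwedge\Gamma$, and the O-modular rule and negation rules can be arranged to keep succedents of size at most one.
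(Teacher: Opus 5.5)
Your proposal is correct and follows the paper's proof essentially step for step. Soundness uses Lemma~\ref{lem:NQ-to-GOM}, soundness of GOM, and the fact that $H$ preserves values. Completeness uses completeness of GOM, Nishimura's normal sequent theorem (Theorem~2.5.a of~\cite{Nish1980}, which is exactly the ``at most one succedent formula'' result you anticipate in step (ii), so your fallbacks are unnecessary), Lemma~\ref{lem:GOM-to-NQ}, the inductive proof of $J(H(\beta))\simeq\beta$, and splicing. The one detail to make explicit is that this theorem yields $H(\Gamma)\vdash\Delta'$ with $\Delta'\subseteq\{H(\alpha)\}$. When $\Delta'=\varnothing$ you first obtain $\bot$ from $J(H(\Gamma))$ and then $J(H(\alpha))$ by EFQ, as the paper does.
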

\begin{proof}
We first prove soundness. Suppose $\Gamma\vdash_{\mathrm{NQ}}\alpha$. By Lemma~\ref{lem:NQ-to-GOM}, $H(\Gamma)\vdash H(\alpha)$ is derivable in GOM. By the soundness of GOM~\cite{Nish1980}, the meet of the values of $H(\Gamma)$ is below the value of $H(\alpha)$. Since $H$ preserves values under the corresponding valuations, we obtain $\llbracket\Gamma\rrbracket_\nu\leq\llbracket\alpha\rrbracket_\nu$.

Next we prove completeness. Suppose that the entailment of $\alpha$ by $\Gamma$ is valid in all orthomodular lattices. As noted after the definition of $H$, this translation preserves the value of each NQ formula under the corresponding valuations. By the completeness of GOM~\cite{Nish1980}, we then obtain a GOM derivation of $H(\alpha)$ from $H(\Gamma)$. By the normal sequent theorem (Theorem~2.5.a of~\cite{Nish1980}), there is a GOM derivation of $H(\Gamma)\vdash\Delta'$ in which all sequents are normal, for some $\Delta'\subseteq\{H(\alpha)\}$. If $\Delta'=\{H(\alpha)\}$, by Lemma~\ref{lem:GOM-to-NQ}, we obtain an NQ derivation of $J(H(\alpha))$ from $J(H(\Gamma))$. If $\Delta'=\varnothing$, by the same lemma, we obtain a derivation of $\bot$ from the same assumptions, and then $J(H(\alpha))$ by EFQ.

We verify that $\beta\simeq J(H(\beta))$ for every NQ formula $\beta$, by induction on its construction.

Base case. The translation fixes propositional variables. The image of $\bot$ is $q\wedge\neg q$, which is interderivable with $\bot$. From an assumption $q\wedge\neg q$, we obtain $q$ and $q\rightarrow\bot$, and hence $\bot$ by ${\rightarrow}\mathrm E$. The converse follows by EFQ.

Induction step. We proceed by cases according to the constructors.

\begin{enumerate}[(i)]
\item Case $\wedge$. By Lemma~\ref{lem:NQ-congruence} and the two induction hypotheses, we obtain the assertion.

\item Case $\rightarrow$. If $\beta=\gamma\rightarrow\delta$, we obtain interderivability of the corresponding expansions in conjunction and negation by the induction hypotheses and Lemma~\ref{lem:NQ-congruence}. The rules ${\rightarrow}\mathrm I_S$ and ${\rightarrow}\mathrm E_S$ relate the expansions to the primitive implications. Negation is included since $\neg\alpha$ abbreviates $\alpha\rightarrow\bot$.
\end{enumerate}

Finally, for each assumption $\gamma\in\Gamma$, we replace the translated assumption $J(H(\gamma))$ by its derivation from $\gamma$, and then append the derivation of $\alpha$ from $J(H(\alpha))$. Thus we obtain $\Gamma\vdash_{\mathrm{NQ}}\alpha$.
\end{proof}

\subsection{Typed Lambda Calculus Based on Quantum Logic}
\label{sec:lambda}

In this subsection, we present the typed lambda calculus $\QuLam$ corresponding to NQ. We first define types inductively from countably many type variables and the type constant $\bot$.

\begin{definition}[Types]
The \emph{types} of $\QuLam$ are inductively defined as follows:
\begin{itemize}
\item \emph{Type variable.} If $\alpha$ is a type variable, then $\alpha$ is a type.
\item \emph{Bottom type.} $\bot$ is a type.
\item \emph{Product type.} If $\alpha$ and $\beta$ are types, then $(\alpha \times \beta)$ is a type.
\item \emph{Implication type.} If $\alpha$ and $\beta$ are types, then $(\alpha \rightarrow \beta)$ is a type.
\item Only strings obtained by finitely many applications of the above rules are types.
\end{itemize}
Parentheses are omitted if there is no ambiguity.
\end{definition}

\begin{definition}[Typing environments]
A \emph{declaration} is an expression of the form $x : \alpha$ consisting of a term variable $x$ and a type $\alpha$.
An \emph{environment} is a finite partial map from term variables to types. We denote environments by $\Gamma$, or by $\Gamma_1$, $\Gamma_2$, $\ldots$. We write $\mathrm{dom}(\Gamma)$ for the domain of $\Gamma$. Two environments are \emph{consistent} when they agree on their common domain. In that case, $\Gamma_1\cup\Gamma_2$ denotes their union as partial maps.
\end{definition}

The environment specifies the available term variables and their types. We now present the formation rules for terms.

\begin{definition}[Typed $\lambda$-terms]\label{def:typed-lambda-terms}
\emph{Typed $\lambda$-terms} (or simply \emph{$\lambda$-terms}) are inductively defined as follows:
\begin{itemize}
\item \emph{Variable.} $x : \alpha$ is a $\lambda$-term under $\Gamma$ if $x : \alpha \in \Gamma$.
\item \emph{Ex falso.} $\varepsilon(M) : \alpha$ is a $\lambda$-term under $\Gamma$ if $M : \bot$ is a
$\lambda$-term under $\Gamma$, where $\alpha$ is an arbitrary type.
\item \emph{Projections.} $\pi_1(M) : \alpha$ and $\pi_2(M) : \beta$ are $\lambda$-terms under $\Gamma$
if $M : \alpha \times \beta$ is a $\lambda$-term under $\Gamma$.
\item \emph{Pairing.} $\langle M, N \rangle : \alpha \times \beta$ is a $\lambda$-term under $\Gamma_1 \cup \Gamma_2$ if $M : \alpha$ is a $\lambda$-term under $\Gamma_1$ and $N : \beta$ is a $\lambda$-term under $\Gamma_2$, where $\Gamma_1$ and $\Gamma_2$ are consistent.
\item \emph{Abstraction.} $(\lambda x.M) : \alpha \rightarrow \beta$ is a $\lambda$-term under $\Gamma$
if $M : \beta$ is a $\lambda$-term under $\{x : \alpha\}$.\footnote{Here the environment $\Gamma$ in the conclusion is understood as an ambient environment available after weakening. The premise for the body remains the singleton environment $\{x : \alpha\}$, so the abstraction is formed from a body whose available variable is $x$.}
\item \emph{Application.} $(MN) : \beta$ is a $\lambda$-term under $\Gamma_1 \cup \Gamma_2$ if
$M : \alpha \rightarrow \beta$ is a $\lambda$-term under $\Gamma_1$ and $N : \alpha$ is a $\lambda$-term under $\Gamma_2$, where $\Gamma_1$ and $\Gamma_2$ are consistent.
\item \emph{Modus tollens.} $\tau(M, N) : \alpha \rightarrow \bot$ is a $\lambda$-term under $\Gamma$ if
$M : \alpha \rightarrow \beta$ is a $\lambda$-term under $\varnothing$ and $N : \beta \rightarrow \bot$ is a $\lambda$-term under $\Gamma$.
\item \emph{Double-negation elimination.} $\eta(M) : \alpha$ is a $\lambda$-term under $\Gamma$ if
$M : (\alpha \rightarrow \bot) \rightarrow \bot$ is a $\lambda$-term under $\Gamma$.
\item \emph{Double-negation introduction.} $\theta(M) : (\alpha \rightarrow \bot) \rightarrow \bot$ is a $\lambda$-term under $\Gamma$ if
$M : \alpha$ is a $\lambda$-term under $\Gamma$.
\item \emph{Sasaki hook introduction.} $\zeta(M) : \alpha\rightarrow\beta$ is a $\lambda$-term under $\Gamma$ if $M : \neg(\alpha\times\neg(\alpha\times\beta))$ is a $\lambda$-term under $\Gamma$, with $\neg\alpha$ abbreviating $\alpha\rightarrow\bot$.
\item \emph{Sasaki hook elimination.} $\xi(M) : \neg(\alpha\times\neg(\alpha\times\beta))$ is a $\lambda$-term under $\Gamma$ if $M : \alpha\rightarrow\beta$ is a $\lambda$-term under $\Gamma$.
\item Only strings obtained by finitely many applications of the above rules are $\lambda$-terms.
\end{itemize}
\end{definition}

The correspondence between term constructors and inference rules of NQ is as follows.

\begin{center}
\begin{tabular}{ll}
\toprule
Term constructor & Inference rule \\
\midrule
$x$ & ASM \\
$\varepsilon(M)$ & EFQ \\
$\pi_1(M)$ & $\wedge\mathrm E_l$ \\
$\pi_2(M)$ & $\wedge\mathrm E_r$ \\
$\langle M,N\rangle$ & $\wedge\mathrm I$ \\
$\lambda x.M$ & ${\rightarrow}\mathrm I$ \\
$MN$ & ${\rightarrow}\mathrm E$ \\
$\tau(M,N)$ & MT \\
$\eta(M)$ & $\neg\neg\mathrm E$ \\
$\theta(M)$ & $\neg\neg\mathrm I$ \\
$\zeta(M)$ & ${\rightarrow}\mathrm I_S$ \\
$\xi(M)$ & ${\rightarrow}\mathrm E_S$ \\
\bottomrule
\end{tabular}
\end{center}

A \emph{typing judgment} is an expression of the form $\Gamma \vdash M : \alpha$, meaning that ``$M : \alpha$ is a $\lambda$-term under the environment $\Gamma$.''
In a typing judgment $\Gamma \vdash M : \alpha$, we often simply call $M$ a $\lambda$-term.
When $\Gamma$ is empty, we write $\vdash M : \alpha$ instead of $\varnothing \vdash M : \alpha$. We may omit braces around environments when listing their declarations.

\begin{definition}[Variables in terms]
\emph{Free} and \emph{bound variables} are defined in the usual way.
The set of free variables $\FV(M)$ of a term $M$ is given by the following clauses:

\begin{itemize}
\item $\FV(x) = \{x\}$.
\item $\FV(\varepsilon(M)) = \FV(M)$.
\item $\FV(\pi_i(M)) = \FV(M)$ $(i = 1, 2)$.
\item $\FV(\langle M, N \rangle) = \FV(M) \cup \FV(N)$.
\item $\FV(\lambda x.M) = \FV(M) \setminus \{x\}$.
\item $\FV(MN) = \FV(M) \cup \FV(N)$.
\item $\FV(\tau(M,N)) = \FV(M) \cup \FV(N)$.\footnote{Since $M$ is closed, this is in fact equal to $\FV(N)$.}
\item $\FV(\eta(M)) = \FV(M)$.
\item $\FV(\theta(M)) = \FV(M)$.
\item $\FV(\zeta(M))=\FV(M)$ and $\FV(\xi(M))=\FV(M)$.
\end{itemize}

In this definition, $\lambda$-terms equivalent under $\alpha$-conversion are treated as the same term.
\end{definition}

The following lemmas concern the behavior of environments under the typing rules. We use these lemmas in the proof of subject reduction and in the construction of abstractions over product arguments in Section~\ref{sec:product-abstraction}.

\begin{lemma}[Free variables]\label{lem:free-var}
If $\Gamma\vdash M:\alpha$, then $\FV(M)\subseteq\mathrm{dom}(\Gamma)$. Moreover, if a term of the form $\lambda x.N$ is typable, then it is closed and $\FV(N)\subseteq\{x\}$.
\end{lemma}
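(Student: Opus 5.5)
The first claim will be proved by induction on the formation of the typed term $M$ under $\Gamma$, following the clauses of Definition~\ref{def:typed-lambda-terms} and the matching clauses for $\FV$.

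\textbf{The first claim.} The cases are as follows.
\begin{enumerate}[(i)]
\item \emph{Variable.} $x:\alpha\in\Gamma$ gives $\FV(x)=\{x\}\subseteq\mathrm{dom}(\Gamma)$.
\item \emph{Unary constructors} $\varepsilon$, $\pi_i$, $\eta$, $\theta$, $\zeta$, $\xi$. The premise is typed under the same $\Gamma$ and the free variables are unchanged, so the induction hypothesis applies directly.
\item \emph{Pairing and application.} The induction hypotheses give $\FV(M)\subseteq\mathrm{dom}(\Gamma_1)$ and $\FV(N)\subseteq\mathrm{dom}(\Gamma_2)$. Hence the union lies in $\mathrm{dom}(\Gamma_1\cup\Gamma_2)=\mathrm{dom}(\Gamma_1)\cup\mathrm{dom}(\Gamma_2)$.
\item \emph{Modus tollens.} $M$ is typed under $\varnothing$, so $\FV(M)=\varnothing$, and $\FV(N)\subseteq\mathrm{dom}(\Gamma)$.
\item \emph{Abstraction.} The body is typed under $\{x:\beta\}$, so the induction hypothesis gives $\FV(M)\subseteq\{x\}$. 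Therefore $\FV(\lambda x.M)=\varnothing\subseteq\mathrm{dom}(\Gamma)$ for whatever ambient $\Gamma$ the footnote allows.
\end{enumerate}

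\textbf{The second claim.} I will first observe that the only typing clause whose conclusion has the syntactic form $\lambda x.N$ is the abstraction clause. Every other clause produces a variable or a term headed by a different constructor. So any typing $\Gamma\vdash\lambda x.N:\gamma$ must end in the abstraction rule, with $\gamma=\alpha\rightarrow\beta$ and premise $\{x:\alpha\}\vdash N:\beta$. Applying the first claim to this premise gives $\FV(N)\subseteq\{x\}$. Consequently $\FV(\lambda x.N)=\FV(N)\setminus\{x\}=\varnothing$, so $\lambda x.N$ is closed.

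\textbf{Expected obstacle.} The only delicate point is the inversion step in the second claim, together with the treatment of $\alpha$-equivalence. Since terms are identified up to $\alpha$-conversion, a typing of $\lambda x.N$ may be given for some $\alpha$-variant $\lambda y.N'$ whose body is typed under $\{y:\alpha\}$. I will handle this by noting that $\FV$ is invariant under $\alpha$-conversion. Then $\FV(N')\subseteq\{y\}$ yields $\FV(\lambda y.N')=\varnothing=\FV(\lambda x.N)$, and hence $\FV(N)\subseteq\{x\}$. I will also make explicit that the footnoted weakening of the ambient environment does not affect the body's environment, which stays the singleton $\{x:\alpha\}$ in every case.
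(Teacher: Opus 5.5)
Your proposal is correct and follows essentially the same route as the paper: induction on the typing derivation, with the abstraction case using the singleton premise environment $\{x:\alpha\}$ to get $\FV(N)\subseteq\{x\}$ and hence closedness. The paper reads the second claim off that abstraction case directly. You instead state the inversion step (only the abstraction rule yields a $\lambda$-headed term) and the $\alpha$-invariance of $\FV$ explicitly, which the paper leaves implicit.
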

\begin{proof}
We proceed by induction on the typing derivation.

Base case. If the last rule types a variable $x$ under $\Gamma$, then $x\in\mathrm{dom}(\Gamma)$ and $\FV(x)=\{x\}$.

Induction step. We proceed by cases according to the last rule applied.\par
\begin{enumerate}[(i)]
\item Case pairing or application. The set of free variables of the resulting term is contained in the domain of the union of the premise environments.
\item Case abstraction. The premise has environment $\{x:\alpha\}$. Therefore, by the induction hypothesis, we obtain $\FV(N)\subseteq\{x\}$ and $\FV(\lambda x.N)=\varnothing$.
\item Case modus tollens. The first argument of $\tau$ is typed in the empty environment, and its second argument satisfies the induction hypothesis.
\item Case remaining constructors. For each unary constructor, including $\zeta$ and $\xi$, the free variables are those of the premise term. The induction hypothesis applies.\qedhere
\end{enumerate}
\end{proof}

We may also enlarge an environment without changing the type of a term.

\begin{lemma}[Weakening]\label{lem:weakening}
If $\Gamma\vdash M:\alpha$ and $\Delta$ is a consistent extension of $\Gamma$, then $\Delta\vdash M:\alpha$.
\end{lemma}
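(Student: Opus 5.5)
We argue by induction on the derivation of $\Gamma\vdash M:\alpha$, proving the statement simultaneously for every consistent extension $\Delta$ of $\Gamma$. That is, the induction hypothesis for a premise judgment $\Gamma'\vdash N:\beta$ says that $\Delta'\vdash N:\beta$ holds for every consistent extension $\Delta'$ of $\Gamma'$. In each case we rebuild the final rule with $\Delta$, or a suitable part of it, in place of the conclusion environment.

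\textbf{Base and unary cases.} If $M$ is a variable $x$ with $x:\alpha\in\Gamma$, then $x:\alpha\in\Delta$, because $\Delta$ extends $\Gamma$ as a partial map, so the variable rule applies. The unary constructors $\varepsilon,\pi_1,\pi_2,\eta,\theta,\zeta,\xi$ keep the premise environment equal to the conclusion environment. For these we apply the induction hypothesis to the premise with the same $\Delta$ and then reapply the rule.

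\textbf{Abstraction and modus tollens.} These are the cases where the restricted rules matter. For $\lambda x.N$, the premise is $\{x:\alpha\}\vdash N:\beta$. The footnote to the abstraction rule says the conclusion environment is an arbitrary ambient one, so we simply reapply the rule with $\Delta$ as the ambient environment; the premise is untouched. We do not weaken the body, so the bound variable $x$ never has to be compared with $\mathrm{dom}(\Delta)$. For $\tau(M_1,N)$, the first premise $\varnothing\vdash M_1$ is left unchanged. We weaken only the second premise from $\Gamma$ to $\Delta$ by the induction hypothesis and then reapply the rule.

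\textbf{Pairing and application.} This is the only case needing care, and I expect it to be the main, though mild, obstacle. Here $\Gamma=\Gamma_1\cup\Gamma_2$ with $\Gamma_1,\Gamma_2$ consistent. Since $\Delta$ extends $\Gamma$, it also extends each $\Gamma_i$ consistently. By the induction hypothesis, $\Delta\vdash M_1$ and $\Delta\vdash M_2$. The two premise environments are then both $\Delta$, which is trivially consistent with itself, and $\Delta\cup\Delta=\Delta$. So the rule gives the typing under $\Delta$. This case is why the induction hypothesis must quantify over all extensions rather than a fixed one.
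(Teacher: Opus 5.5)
Your proposal is correct and follows the paper's proof essentially case for case. Both argue by induction on the typing derivation. Both leave the singleton premise of abstraction and the closed first premise of $\tau$ untouched, and both weaken the two premises of pairing and application to $\Delta$, so that their union is $\Delta$ itself. Your explicit quantification over all consistent extensions in the induction hypothesis just makes precise what the paper uses implicitly.
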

\begin{proof}
We proceed by induction on the typing derivation.

Base case. If $M$ is a variable typed under $\Gamma$, its declaration also belongs to $\Delta$.

Induction step. We proceed by cases according to the form of the last typing rule.\par
\begin{enumerate}[(i)]
\item Case pairing or application. By the induction hypotheses, both premises are typed under $\Delta$. Their union is $\Delta$, which is consistent with itself.
\item Case abstraction. The singleton premise is unchanged, since the conclusion may have any ambient environment.
\item Case modus tollens. The closed first premise is unchanged, and the induction hypothesis applies to the second premise.
\item Case remaining constructors. By the induction hypothesis, typing under $\Delta$ is preserved by their formation rules, including the rules for $\zeta$ and $\xi$.\qedhere
\end{enumerate}
\end{proof}

We next prove a substitution lemma in which all substituted terms are typed under a common environment. Since abstractions are closed, they remain unchanged under such substitutions.

\begin{lemma}[Substitution]\label{lem:substitution}
Suppose $\{x_1:\alpha_1,\ldots,x_k:\alpha_k\}\vdash M:\beta$ with distinct variables $x_i$, and $\Delta\vdash N_i:\alpha_i$ for every $i$. Then $\Delta\vdash M[x_1:=N_1,\ldots,x_k:=N_k]:\beta$, where substitution is simultaneous and avoids variable capture. The same statement holds when $M$ is typed under a subenvironment of $\{x_1:\alpha_1,\ldots,x_k:\alpha_k\}$.
\end{lemma}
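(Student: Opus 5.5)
The plan is induction on the typing derivation of $M$ under the environment $\Gamma_0=\{x_1:\alpha_1,\ldots,x_k:\alpha_k\}$, or under a subenvironment of it. Write $\sigma$ for the simultaneous substitution $[x_1:=N_1,\ldots,x_k:=N_k]$. The subenvironment clause is treated first: by Lemma~\ref{lem:weakening}, any $M$ typed under a subenvironment is also typed under $\Gamma_0$, provided $\Gamma_0$ is a consistent extension of it. This holds trivially because all declarations come from $\Gamma_0$. So it suffices to prove the main statement, strengthened to say ``typed under some subenvironment $\Gamma'\subseteq\Gamma_0$''. This strengthening is what makes the induction go through, since pairing and application split the environment into pieces $\Gamma_1\cup\Gamma_2$.

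\textbf{Base case.} If $M$ is a variable, it is some $x_i$, and then $M\sigma=N_i$ with $\Delta\vdash N_i:\alpha_i$ by hypothesis. Note that $\Gamma'$ contains only the $x_i$, so no other variable can occur.

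\textbf{Induction step.} Pairing and application have premises typed under $\Gamma_1,\Gamma_2\subseteq\Gamma_0$. The induction hypothesis gives $\Delta\vdash P\sigma$ and $\Delta\vdash Q\sigma$, and $\Delta\cup\Delta=\Delta$ is consistent, so the rule reapplies. Every substituted term lives under the same $\Delta$, which is why no consistency issue arises. The unary constructors $\varepsilon,\pi_i,\eta,\theta,\zeta,\xi$ follow directly from the induction hypothesis, since the environment is unchanged and substitution commutes with the constructor. For modus tollens $\tau(P,Q)$, the first argument $P$ is typed under $\varnothing$. By Lemma~\ref{lem:free-var} it is closed, so $P\sigma=P$ and it remains typed under $\varnothing$. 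The induction hypothesis applies to $Q$, and the rule reapplies under $\Delta$.

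\textbf{Main obstacle: abstraction.} Consider $M=\lambda y.P$ with $\{y:\gamma\}\vdash P:\delta$. The induction hypothesis cannot be used here, because the premise environment is not a subenvironment of $\Gamma_0$. Instead, Lemma~\ref{lem:free-var} shows that $\lambda y.P$ is closed. Hence a capture-avoiding $\sigma$ leaves it unchanged, at most up to $\alpha$-conversion of the bound $y$, which terms are identified modulo anyway. The abstraction rule then allows any ambient environment, in particular $\Delta$, so $\Delta\vdash\lambda y.P:\gamma\rightarrow\delta$ follows from the same premise.
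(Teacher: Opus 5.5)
Your proposal is correct and follows essentially the same route as the paper: induction on the typing derivation, where abstractions and the first argument of $\tau$ are left unchanged because they are closed (Lemma~\ref{lem:free-var}), and the binary rules are reassembled using $\Delta\cup\Delta=\Delta$. The differences are cosmetic. You build the subenvironment clause into the induction hypothesis, which is exactly what the pairing and application cases need and which the paper leaves implicit. You also retype the abstraction under $\Delta$ directly from the ambient-environment clause of the abstraction rule, where the paper cites Lemma~\ref{lem:weakening}.
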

\begin{proof}
We proceed by induction on a typing derivation of $M$.

Base case. If $M=x_i$, we obtain the required typing from $\Delta\vdash N_i:\alpha_i$.

Induction step. We proceed by cases according to the constructors.\par
\begin{enumerate}[(i)]
\item Case pairing or application. By the induction hypotheses, we obtain both premises under $\Delta$. We use their consistent union $\Delta\cup\Delta$.
\item Case abstraction. By Lemma~\ref{lem:free-var}, the abstraction is closed. After renaming the bound variable to avoid capture, substitution leaves it unchanged. By Lemma~\ref{lem:weakening}, we obtain its original typing under $\Delta$.
\item Case modus tollens. The first argument of $\tau$ is closed and unchanged by substitution. We apply the induction hypothesis to the second argument.
\item Case remaining constructors. The rules for $\varepsilon$, projections, $\eta$, $\theta$, $\zeta$, and $\xi$ follow by applying the induction hypotheses to their premises.
\end{enumerate}

An initial subenvironment can first be weakened to $\{x_1:\alpha_1,\ldots,x_k:\alpha_k\}$ using Lemma~\ref{lem:weakening}.
\end{proof}

We next define subterms in preparation for the conversion rules.

\begin{definition}[Subterms]
A $\lambda$-term $N$ that occurs as a subexpression of the $\lambda$-term $M$ is called a \emph{subterm} of $M$.
Equivalently, $N$ corresponds to a subtree of the abstract syntax tree of $M$.
\end{definition}

We now define conversion rules on $\lambda$-terms, corresponding on the proof-theoretic side to the elimination of detours in proofs of NQ. Type annotations in the conversion patterns and in Proposition~\ref{prop:dn-retraction} indicate the types assigned to the respective terms.

\begin{definition}[Conversion]\label{Conv}
A $\lambda$-term $M$ is said to be \emph{converted to $M'$ in one step}, written $M \Rightarrow M'$, if $M'$ is obtained from $M$ by one of the following replacements of a subterm:
\begin{itemize}
\item $\pi_1(\langle N:\alpha, K:\beta \rangle)$ is replaced with $N:\alpha$.
\item $\pi_2(\langle N:\alpha, K:\beta \rangle)$ is replaced with $K:\beta$.
\item $(\lambda x:\alpha. N:\beta)(K:\alpha)$ is replaced with $(N[x:=K]):\beta$. Here $N[x:=K]$ denotes the result of substituting $K$ for all free occurrences of $x$ in $N$. Bound variables are assumed to have been renamed consistently so as to avoid capturing free variables.
\item $\eta(\theta(N:\alpha))$ is replaced with $N:\alpha$.
\end{itemize}
\end{definition}

No conversion rule has $\zeta$ or $\xi$ at its root. In particular, the terms $\xi(\zeta(M))$ and $\zeta(\xi(N))$ remain unchanged at their roots under conversion.

We write $\Rightarrow^{*}$ for the reflexive and transitive closure of $\Rightarrow$. That is, for $\lambda$-terms $M$ and $N$, we write $M \Rightarrow^{*} N$ if there exist $\lambda$-terms $M_0, M_1, \ldots, M_n$ such that $M_0 = M$, $M_n = N$, and $M_i \Rightarrow M_{i+1}$ for all $0 \leq i < n$, where $n \geq 0$. The case $n=0$ means that no conversion step is performed.

\begin{definition}[Normal forms]
A $\lambda$-term to which no one-step conversion can be applied is called a \emph{normal form}. A $\lambda$-term $M$ is said to be \emph{normalizable} if there exists a normal form $N$ such that $M \Rightarrow^{*} N$.
\end{definition}

The conversion rules preserve typing, as we now verify.

\begin{proposition}[Subject reduction]\label{prop:subject-reduction}
If $\Gamma\vdash M:\alpha$ and $M\Rightarrow N$, then $\Gamma\vdash N:\alpha$. Consequently, if $\Gamma\vdash M:\alpha$ and $M\Rightarrow^{*}N$, then $\Gamma\vdash N:\alpha$.
\end{proposition}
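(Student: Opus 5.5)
We first prove the one-step claim by induction on the position of the converted subterm in $M$, and obtain the multi-step claim by induction on the length $n$ of the sequence $M=M_0\Rightarrow\cdots\Rightarrow M_n=N$.

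\emph{Root case.} Suppose the redex is $M$ itself. We inspect the last typing rules of a derivation of $\Gamma\vdash M:\alpha$.
\begin{enumerate}[(i)]
\item If $M=\pi_1(\langle N,K\rangle)$, the projection rule gives $\Gamma\vdash\langle N,K\rangle:\alpha\times\beta$. The pairing rule then gives $\Gamma_1\vdash N:\alpha$ with $\Gamma_1\cup\Gamma_2=\Gamma$. Lemma~\ref{lem:weakening} yields $\Gamma\vdash N:\alpha$. The case $\pi_2$ is symmetric.
\item If $M=\eta(\theta(N))$, then $\theta(N)$ has type $(\alpha\rightarrow\bot)\rightarrow\bot$ under $\Gamma$, and the $\theta$ rule gives $\Gamma\vdash N:\alpha$ directly.
\item If $M=(\lambda x.N)K$, the application rule gives $\Gamma_1\vdash\lambda x.N:\gamma\rightarrow\alpha$ and $\Gamma_2\vdash K:\gamma$ with $\Gamma_1\cup\Gamma_2=\Gamma$. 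The abstraction rule then forces $\{x:\gamma\}\vdash N:\alpha$. We weaken $K$ to $\Gamma_2\subseteq\Gamma$ by Lemma~\ref{lem:weakening}, giving $\Gamma\vdash K:\gamma$. Lemma~\ref{lem:substitution} with $k=1$ and $\Delta=\Gamma$ then gives $\Gamma\vdash N[x:=K]:\alpha$.
\end{enumerate}
Because of the singleton-environment restriction on abstraction, we never need to carry extra context into the body.

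\emph{Congruence case.} Suppose the redex lies strictly inside $M$. We perform the case analysis on the outermost constructor of $M$ and apply the induction hypothesis to the immediate subterm containing the redex, under the environment that subterm is typed with in the derivation. We then reapply the same formation rule. This step needs the following care.
\begin{enumerate}[(i)]
\item For pairing and application, the subterm is typed under $\Gamma_i$, and the reconstructed term is still typed under $\Gamma_1\cup\Gamma_2$.
\item For an abstraction $\lambda x.N$ with $N\Rightarrow N'$, the induction hypothesis is applied under $\{x:\alpha\}$. Hence $N'$ is still typed under the singleton environment, so the restricted abstraction rule applies again.
\item For $\tau(M_1,M_2)$, if the redex is in $M_1$, the induction hypothesis is applied with the empty environment. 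This keeps the closedness premise of modus tollens intact.
\item For $\varepsilon,\pi_i,\eta,\theta,\zeta,\xi$, we apply the induction hypothesis directly with $\Gamma$.
\end{enumerate}
For this reason the induction should be phrased over arbitrary environments, i.e.\ as ``for all $\Gamma$, if $\Gamma\vdash M:\alpha$ and $M\Rightarrow N$ then $\Gamma\vdash N:\alpha$'' by induction on $M$.

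The main obstacle is the $\beta$-case. There one must check that substitution is compatible with the split environments $\Gamma_1,\Gamma_2$. Weakening $K$ to the full $\Gamma$ before invoking Lemma~\ref{lem:substitution} resolves this.

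A further subtlety is that typing derivations are not syntax-directed with respect to environments. The abstraction rule's footnote allows any ambient $\Gamma$. I would therefore invert the typing derivation itself rather than the term, which is harmless since each constructor has exactly one formation rule.
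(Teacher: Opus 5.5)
Your proposal is correct and takes essentially the same approach as the paper: root cases for the two projections, $\eta(\theta(N))$, and $\beta$-conversion via Lemma~\ref{lem:substitution} and Lemma~\ref{lem:weakening}, congruence cases that preserve the singleton premise of abstraction and the closed first premise of $\tau$, and iteration to obtain the result for $\Rightarrow^{*}$. The differences are cosmetic: the paper inducts on the typing derivation rather than on the term, and in the $\beta$-case it substitutes under the argument's own environment and weakens afterwards, whereas you weaken $K$ to $\Gamma$ before substituting.
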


\begin{proof}
We proceed by induction on the typing derivation.

Base case. A variable has no conversion step, so there is nothing to prove.

Induction step. We consider root conversions and conversions in proper subterms separately.\par
\begin{enumerate}[(i)]
\item Case projection. In a projection from a pair, we select a component typed under one of two consistent environments. By Lemma~\ref{lem:weakening}, that component is typed under their union.
\item Case beta conversion. In $(\lambda x.M)N\Rightarrow M[x:=N]$, we obtain $\{x:\alpha\}\vdash M:\beta$ from the abstraction rule and a typing of $N$ with type $\alpha$ from the application rule. By Lemma~\ref{lem:substitution}, the result is typed under the environment of $N$. By weakening, this typing is extended to the environment of the application.
\item Case double negation. In $\eta(\theta(M))\Rightarrow M$, $M$ is already typed with the required type in the premise.
\item Case proper subterm. By the induction hypothesis, the term in the affected premise remains typable with the same type. We then reconstruct the typing of the containing term by the formation rule. In particular, the singleton premise of abstraction and the closed premise of $\tau$ are still typed under their required environments. The same reasoning applies to $\zeta$ and $\xi$, since neither has a root conversion. By iterating subject reduction for one-step conversions, we obtain subject reduction for $\Rightarrow^{*}$.\qedhere
\end{enumerate}
\end{proof}

In what follows, we establish strong normalization using reducibility sets. The definitions and their basic properties are extensions of the method of Section~5 of~\cite{Toku2022} to the present term constructors. The proof method is based on~\cite{Gira1989}.

\begin{definition}[Reducibility sets]\label{def:reducibility}
We consider \emph{raw terms} formed from the constructors in Definition~\ref{def:typed-lambda-terms} without imposing typing conditions. Substitution is assumed to avoid variable capture. By removing the type annotations from the two projection conversions, beta conversion, and double-negation conversion of Definition~\ref{Conv}, we extend these conversions to raw terms. We write $\mathrm{SN}$ for the set of raw terms having no infinite conversion sequence.

For each type $\alpha$, we define the \emph{reducibility set} $\mathrm{RED}_\alpha$ of raw terms by induction on the structure of $\alpha$. If $\alpha$ is a type variable or $\bot$, we set $\mathrm{RED}_\alpha=\mathrm{SN}$. We define $\mathrm{RED}_{\alpha\times\beta}$ to consist of the terms $M$ such that $\pi_1(M)\in\mathrm{RED}_\alpha$ and $\pi_2(M)\in\mathrm{RED}_\beta$. We define $\mathrm{RED}_{\alpha\rightarrow\beta}$ to consist of the terms $M$ such that $MN\in\mathrm{RED}_\beta$ for every $N\in\mathrm{RED}_\alpha$. If $\alpha=\delta\rightarrow\bot$ and $\beta=\bot$, we impose the additional condition $\eta(M)\in\mathrm{RED}_\delta$. This condition corresponds to the argument type of the constructor $\eta$. We call the members of $\mathrm{RED}_\alpha$ \emph{reducible} at type $\alpha$.
\end{definition}

Every raw term has finitely many subterms, and only finitely many conversion rules can be applied to each subterm. Consequently, each raw term has only finitely many immediate reducts. Moreover, if $M\in\mathrm{SN}$, no infinite conversion sequence starts from $M$. Therefore, by König's lemma, the lengths of conversion sequences starting from $M$ have a finite maximum. We denote this maximum by $h(M)$ and use it as a measure in the following inductions.

\begin{lemma}[Properties of reducibility sets]\label{lem:reducibility-properties}
For every type $\alpha$, the following properties hold.
\begin{itemize}
\item (C1) Every term in $\mathrm{RED}_\alpha$ is strongly normalizing.
\item (C2) If $M\in\mathrm{RED}_\alpha$ and $M\Rightarrow M'$, then $M'\in\mathrm{RED}_\alpha$.
\item (C3) If the outermost constructor of $M$ is neither pairing, abstraction, nor $\theta$, and all immediate reducts of $M$ belong to $\mathrm{RED}_\alpha$, then $M\in\mathrm{RED}_\alpha$.
\end{itemize}
\end{lemma}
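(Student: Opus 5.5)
We prove (C1)--(C3) simultaneously by induction on the structure of $\alpha$, following Girard's method. Two facts are used throughout. First, $\mathrm{SN}$ is closed under conversion. Second, a variable $x$ is a \emph{neutral} term in the sense of (C3): its outermost constructor is not pairing, abstraction, or $\theta$, and it has no immediate reducts. By (C3) at lower types, variables therefore belong to every reducibility set.

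\emph{Base case} ($\alpha$ a type variable or $\bot$). Here $\mathrm{RED}_\alpha=\mathrm{SN}$. (C1) is immediate. For (C2), any reduct of a term in $\mathrm{SN}$ is again in $\mathrm{SN}$. For (C3), if every immediate reduct of $M$ is in $\mathrm{SN}$, then any infinite sequence from $M$ would yield an infinite sequence from some immediate reduct, so $M\in\mathrm{SN}$.

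\emph{Product case} $\alpha\times\beta$.
\begin{itemize}
\item (C1): if $M\in\mathrm{RED}_{\alpha\times\beta}$, then $\pi_1(M)\in\mathrm{SN}$ by the induction hypothesis. Every reduction sequence from $M$ lifts to one from $\pi_1(M)$, so $M\in\mathrm{SN}$.
\item (C2): $\pi_i(M)\Rightarrow\pi_i(M')$, and we apply (C2) at the component types.
\item (C3): let $M$ be neutral. Since $M$ is not a pair, $\pi_i(M)$ has no root redex. Its immediate reducts are therefore exactly $\pi_i(M')$ with $M\Rightarrow M'$, and these lie in $\mathrm{RED}$ by assumption. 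Since $\pi_i(M)$ is itself neutral (projection is not an excluded constructor), (C3) at the component types gives $\pi_i(M)\in\mathrm{RED}$.
\end{itemize}

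\emph{Implication case} $\alpha\rightarrow\beta$, which we treat jointly with the extra clause when $\alpha=\delta\rightarrow\bot$ and $\beta=\bot$.
\begin{itemize}
\item (C1): pick a variable $x\in\mathrm{RED}_\alpha$ (available by (C3) at $\alpha$). Then $Mx\in\mathrm{RED}_\beta\subseteq\mathrm{SN}$, and reductions of $M$ lift to reductions of $Mx$.
\item (C2): $MN\Rightarrow M'N$. In the negation case we also have $\eta(M)\Rightarrow\eta(M')$. Both conclusions follow from (C2) at $\beta$ and at $\delta$.
\item (C3): let $M$ be neutral and $N\in\mathrm{RED}_\alpha$. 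We show $MN\in\mathrm{RED}_\beta$ by induction on $h(N)$, which is finite by (C1) at $\alpha$. Since $M$ is not an abstraction, $MN$ has no root beta redex. Its immediate reducts are therefore $M'N$, which lies in $\mathrm{RED}_\beta$ by hypothesis, and $MN'$, which lies in $\mathrm{RED}_\beta$ by the inner induction, using (C2) to keep $N'\in\mathrm{RED}_\alpha$. As $MN$ is neutral, (C3) at $\beta$ applies.
\item The extra clause under (C3): $\eta(M)$ has no root redex because $M$ is not of the form $\theta(\cdot)$. Its reducts $\eta(M')$ lie in $\mathrm{RED}_\delta$ by the hypothesis on $M'$, and $\eta(M)$ is neutral, so (C3) at $\delta$ gives $\eta(M)\in\mathrm{RED}_\delta$.
\end{itemize}

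\emph{Main obstacle.} The main obstacle is the extra $\eta$-clause, since it refers to $\mathrm{RED}_\delta$, where $\delta$ is a proper subtype of $\delta\rightarrow\bot$. We therefore order the induction by type structure so that $\delta$ is handled first, which is legitimate. We must also check carefully that excluding $\theta$ in (C3) is exactly what rules out a root redex $\eta(\theta(\cdot))$. Likewise, excluding pairs and abstractions rules out root projection and beta redexes, so no further case analysis on root conversions is needed.
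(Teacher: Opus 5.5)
Your proposal is correct and follows the paper's proof essentially step for step. Both use simultaneous induction on the type with $\mathrm{RED}=\mathrm{SN}$ at the base, and obtain (C1) by lifting reductions to $\pi_1(M)$ and to $Mx$ for a variable $x$ reducible by (C3). For (C3) at implication types both use an inner induction on $h(N)$, and the extra $\eta$-clause is handled by (C3) at $\delta$, which is legitimate because $M$ is not a $\theta$-term.
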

\begin{proof}
We prove the three properties simultaneously by induction on the complexity of $\alpha$.

Base case. If $\alpha$ is a type variable or $\bot$, then $\mathrm{RED}_\alpha=\mathrm{SN}$. The assertion in (C1) follows from the definition. For (C2), any reduct of a strongly normalizing term is strongly normalizing. For (C3), an infinite conversion sequence from $M$ would pass through an immediate reduct of $M$. Therefore, if all immediate reducts are strongly normalizing, then $M$ is strongly normalizing.

Induction step. We assume (C1)--(C3) for all types of lower complexity. We proceed by cases according to the form of $\alpha$.

\begin{enumerate}[(i)]
\setlength{\parskip}{\medskipamount}
\item Case product type. Suppose $\alpha=\beta\times\gamma$. If $M\in\mathrm{RED}_{\beta\times\gamma}$, then $\pi_1(M)\in\mathrm{RED}_\beta\subseteq\mathrm{SN}$. Since an infinite conversion sequence from $M$ would induce one from $\pi_1(M)$, we prove (C1). If $M\Rightarrow M'$, then $\pi_i(M)\Rightarrow\pi_i(M')$ for $i=1,2$. By applying (C2) at the component types, we prove (C2) for the product. For (C3), suppose $M$ is not a pair and all its immediate reducts are reducible at $\beta\times\gamma$. The projections $\pi_i(M)$ have no root conversion, and their immediate reducts are the corresponding projections of immediate reducts of $M$. By (C3) at the component types, we obtain $\pi_1(M)\in\mathrm{RED}_\beta$ and $\pi_2(M)\in\mathrm{RED}_\gamma$. Hence $M\in\mathrm{RED}_{\beta\times\gamma}$.

\item Case implication type. Suppose $\alpha=\beta\rightarrow\gamma$. By (C3) at type $\beta$, a variable $x$ belongs to $\mathrm{RED}_\beta$. If $M\in\mathrm{RED}_{\beta\rightarrow\gamma}$, then $Mx\in\mathrm{RED}_\gamma\subseteq\mathrm{SN}$. Since every conversion in $M$ induces a conversion in $Mx$, we establish (C1). If $M\Rightarrow M'$, then $MN\Rightarrow M'N$ for each $N\in\mathrm{RED}_\beta$. By (C2) at type $\gamma$, we obtain $M'N\in\mathrm{RED}_\gamma$. In the special case $\beta=\delta\rightarrow\bot$ and $\gamma=\bot$, we also have $\eta(M)\Rightarrow\eta(M')$. By (C2) at type $\delta$, the additional condition is preserved. This proves (C2).

For (C3), suppose that $M$ has none of the three excluded outermost constructors and that every immediate reduct $M'$ belongs to $\mathrm{RED}_{\beta\rightarrow\gamma}$. Since $M$ is not an abstraction, $MN$ has no root conversion for any $N$. Fix $N\in\mathrm{RED}_\beta$. By (C1) at type $\beta$, $h(N)$ is finite. We establish $MN\in\mathrm{RED}_\gamma$ by induction on $h(N)$.

Base case. If $h(N)=0$, each immediate reduct of $MN$ has the form $M'N$ with $M\Rightarrow M'$. It belongs to $\mathrm{RED}_\gamma$ by the assumption on $M'$. Therefore, by (C3) at type $\gamma$, we obtain $MN\in\mathrm{RED}_\gamma$.

Induction step. Assume the claim for reducible arguments of smaller maximal conversion length. The immediate reducts of $MN$ have the forms $M'N$ with $M\Rightarrow M'$ and $MN'$ with $N\Rightarrow N'$. The first form belongs to $\mathrm{RED}_\gamma$ by the assumption on $M'$. In the second form, by (C2) at type $\beta$, we obtain $N'\in\mathrm{RED}_\beta$, and $h(N')<h(N)$. Therefore, by the induction hypothesis, we obtain $MN'\in\mathrm{RED}_\gamma$. By (C3) at type $\gamma$, we obtain $MN\in\mathrm{RED}_\gamma$.

Thus $MN\in\mathrm{RED}_\gamma$ for every $N\in\mathrm{RED}_\beta$. If $\beta=\delta\rightarrow\bot$ and $\gamma=\bot$, we must also establish $\eta(M)\in\mathrm{RED}_\delta$. Since $M$ is not a $\theta$ term, $\eta(M)$ has no root conversion. Each of its immediate reducts is $\eta(M')$ with $M\Rightarrow M'$. These reducts belong to $\mathrm{RED}_\delta$ by the additional condition on $M'$. Therefore, by (C3) at type $\delta$, we obtain $\eta(M)\in\mathrm{RED}_\delta$.\qedhere
\end{enumerate}
\end{proof}

The next lemma states the constructor properties used in the substitution argument. In particular, statement (iv) concerning abstraction does not assume that $M$ or $N$ is typable. This is because the reducibility sets in Definition~\ref{def:reducibility} are defined for raw terms.

\begin{lemma}[Reducibility of constructors]\label{lem:reducibility-constructors}
The following statements hold for all types $\alpha$ and $\beta$.
\begin{enumerate}[(i)]
\item $\varepsilon(M)\in\mathrm{RED}_\alpha$ if $M\in\mathrm{RED}_\bot$.
\item $\pi_1(M)\in\mathrm{RED}_\alpha$ and $\pi_2(M)\in\mathrm{RED}_\beta$ if $M\in\mathrm{RED}_{\alpha\times\beta}$.
\item $\langle M,N\rangle\in\mathrm{RED}_{\alpha\times\beta}$ if $M\in\mathrm{RED}_\alpha$ and $N\in\mathrm{RED}_\beta$.
\item $\lambda x.M\in\mathrm{RED}_{\alpha\rightarrow\beta}$ if $M[x:=N]\in\mathrm{RED}_\beta$ for every $N\in\mathrm{RED}_\alpha$.
\item $MN\in\mathrm{RED}_\beta$ if $M\in\mathrm{RED}_{\alpha\rightarrow\beta}$ and $N\in\mathrm{RED}_\alpha$.
\item $\tau(M,N)\in\mathrm{RED}_{\alpha\rightarrow\bot}$ if $M\in\mathrm{RED}_{\alpha\rightarrow\beta}$ and $N\in\mathrm{RED}_{\beta\rightarrow\bot}$.
\item $\eta(M)\in\mathrm{RED}_\alpha$ if $M\in\mathrm{RED}_{(\alpha\rightarrow\bot)\rightarrow\bot}$.
\item $\theta(M)\in\mathrm{RED}_{(\alpha\rightarrow\bot)\rightarrow\bot}$ if $M\in\mathrm{RED}_\alpha$.
\end{enumerate}
\end{lemma}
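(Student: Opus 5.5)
We prove each statement separately. The main tools are (C1)--(C3) from Lemma~\ref{lem:reducibility-properties} and an induction on the measure $h$ (or on sums of $h$ over several arguments), as in Girard's method~\cite{Gira1989}. The easy items go first. Statement (ii) is the definition of $\mathrm{RED}_{\alpha\times\beta}$. Statement (v) is the definition of $\mathrm{RED}_{\alpha\rightarrow\beta}$. Statement (vii) is the extra condition imposed on $\mathrm{RED}_{(\alpha\rightarrow\bot)\rightarrow\bot}$, since $\eta(M)\in\mathrm{RED}_\alpha$ is required there with $\delta=\alpha$.

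For (i), $\varepsilon(M)$ is neither a pair, an abstraction, nor a $\theta$ term, and no conversion has $\varepsilon$ at its root. We argue by induction on $h(M)$, which is finite by (C1) since $M\in\mathrm{RED}_\bot=\mathrm{SN}$. Every immediate reduct has the form $\varepsilon(M')$ with $M'\in\mathrm{SN}$ and $h(M')<h(M)$. The induction hypothesis puts these reducts in $\mathrm{RED}_\alpha$, and (C3) then gives $\varepsilon(M)\in\mathrm{RED}_\alpha$.

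Statement (vi) uses the same pattern with induction on $h(M)+h(N)$. The term $\tau(M,N)$ has no root conversion, and its outer constructor is not excluded in (C3). Its reducts only reduce $M$ or $N$, and (C2) keeps them reducible. So we get $\tau(M,N)\in\mathrm{RED}_{\alpha\rightarrow\bot}$ by (C3) at type $\alpha\rightarrow\bot$. One subtlety is that (C3) at an implication type already builds in the checks for applications $\tau(M,N)K$ and, when $\alpha=\delta\rightarrow\bot$, for $\eta(\tau(M,N))$. So only the immediate reducts need to be examined.

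Statement (iii) is the standard pair argument. We show $\pi_1\langle M,N\rangle\in\mathrm{RED}_\alpha$ by induction on $h(M)+h(N)$ using (C3). The reducts are $M$ itself, which is reducible, and $\pi_1\langle M',N\rangle$ or $\pi_1\langle M,N'\rangle$, which are handled by (C2) and the induction hypothesis. The case of $\pi_2$ is symmetric.

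For (iv), fix $N\in\mathrm{RED}_\alpha$. We show $(\lambda x.M)N\in\mathrm{RED}_\beta$ by induction on $h(M)+h(N)$. Here $M=M[x:=x]$ is reducible because variables are reducible, and hence it lies in $\mathrm{SN}$. The reducts are $M[x:=N]$, which is reducible by hypothesis, together with $(\lambda x.M')N$ and $(\lambda x.M)N'$. For $M\Rightarrow M'$, the hypothesis transfers because $M[x:=N]\Rightarrow M'[x:=N]$ and (C2) applies. Then (C3) closes the case.

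When $\alpha=\delta\rightarrow\bot$ and $\beta=\bot$, we must also show $\eta(\lambda x.M)\in\mathrm{RED}_\delta$. This term has no root conversion, since $\eta$ reduces only against $\theta$. Its reducts are $\eta(\lambda x.M')$, so an induction on $h(M)$ with (C3) at $\delta$ handles it.

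Statement (viii) is the main obstacle. Let $T=\theta(M)$. First we need $TN\in\mathrm{RED}_\bot=\mathrm{SN}$ for every $N\in\mathrm{RED}_{\alpha\rightarrow\bot}$. Since $\theta$ is not an abstraction, $TN$ has no root conversion. Its reducts come only from reducts of $M$ and $N$, so induction on $h(M)+h(N)$ shows that $TN\in\mathrm{SN}$. Second, we need $\eta(T)\in\mathrm{RED}_\alpha$. Its reducts are $M$, which is reducible, and $\eta(\theta(M'))$, which is handled by induction on $h(M)$. Since $\eta(\theta(M))$ is not among the excluded forms, (C3) at $\alpha$ finishes the argument.

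The delicate point is that (C3) at type $(\alpha\rightarrow\bot)\rightarrow\bot$ cannot be used for $T$, because $\theta$ is excluded there. So we verify both defining conditions of $\mathrm{RED}_{(\alpha\rightarrow\bot)\rightarrow\bot}$ directly, as above, and this is exactly why $\theta$ was excluded in (C3).
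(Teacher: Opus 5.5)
Your proof is correct. Apart from item (vi), it follows the paper's argument essentially step for step. Items (ii), (v) and (vii) hold by definition. Items (i), (iii), (iv) and the $\eta(\theta(M))$ clause of (viii) go by inductions on $h$, or on sums of $h$, closed off with (C2) and (C3).

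In (vi) you take a shortcut that the paper does not. The paper unfolds the definition of $\mathrm{RED}_{\alpha\rightarrow\bot}$ and checks its two clauses separately:
\begin{enumerate}[(a)]
\item it obtains $\tau(M,N)K\in\mathrm{SN}$ directly from strong normalization of $M$, $N$ and $K$;
\item for $\alpha=\delta\rightarrow\bot$, it proves $\eta(\tau(M,N))\in\mathrm{RED}_\delta$ by a separate induction on $h(M)+h(N)$, using (C3) at $\delta$.
\end{enumerate}
You instead note that $\tau(M,N)$ is neutral and apply (C3) at the compound type $\alpha\rightarrow\bot$ itself. This is legitimate, because the proof of (C3) at implication types already performs exactly those two checks. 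The application clause is handled there by induction on the argument's $h$, and the $\eta$ clause via (C3) at $\delta$.

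Your version is shorter. It also makes visible that, since no conversion rule consumes a $\tau$-term, $\tau(M,N)$ is reducible at every type once $M,N\in\mathrm{SN}$. The paper's version keeps the extra $\eta$ clause of Definition~\ref{def:reducibility} explicit and invokes (C3) only at the smaller type $\delta$.

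Your closing remark on (viii) is also right: (C3) is unavailable for $\theta(M)$, so both defining conditions must be checked by hand. That is exactly why the paper treats that item directly as well.
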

\begin{proof}
We treat the constructors in turn. Projections and application satisfy the required conditions directly by Definition~\ref{def:reducibility}. The condition for $\eta$ is the additional clause of the same definition.

\begin{enumerate}[(i)]
\setlength{\parskip}{\medskipamount}
\item[(i)] Case ex falso. Suppose $M\in\mathrm{RED}_\bot$. We prove $\varepsilon(M)\in\mathrm{RED}_\alpha$ by induction on $h(M)$.

Base case. If $M$ is normal, then $\varepsilon(M)$ has no immediate reduct. Since it satisfies the outermost constructor condition of (C3), we obtain the claim.

Induction step. Every immediate reduct of $\varepsilon(M)$ is $\varepsilon(M')$ for some $M\Rightarrow M'$. By (C2), $M'\in\mathrm{RED}_\bot$, and $h(M')<h(M)$. By the induction hypothesis, we obtain $\varepsilon(M')\in\mathrm{RED}_\alpha$. Therefore, by (C3), we obtain $\varepsilon(M)\in\mathrm{RED}_\alpha$.

\item[(iii)] Case pairing. Suppose $M\in\mathrm{RED}_\alpha$ and $N\in\mathrm{RED}_\beta$. We prove that both projections of $\langle M,N\rangle$ are reducible by induction on $h(M)+h(N)$.

Base case. If both terms are normal, the only immediate reduct of $\pi_1(\langle M,N\rangle)$ is $M$, and the only immediate reduct of $\pi_2(\langle M,N\rangle)$ is $N$. Therefore, by (C3) at the respective component types, we obtain both claims.

Induction step. An immediate reduct of $\pi_i(\langle M,N\rangle)$ is either the selected component or a projection of $\langle M',N\rangle$ with $M\Rightarrow M'$ or $\langle M,N'\rangle$ with $N\Rightarrow N'$. The selected component is reducible by assumption. By (C2), reducibility of the reduced component is preserved. Since the corresponding sum of maximal conversion lengths is smaller, the induction hypothesis applies to the other reducts. By (C3) at the component type, we obtain reducibility of each projection and hence of the pair.

\item[(iv)] Case abstraction. Suppose $M[x:=N]\in\mathrm{RED}_\beta$ for every $N\in\mathrm{RED}_\alpha$. By (C3), the variable $x$ is reducible at type $\alpha$. Taking $N=x$, we obtain $M=M[x:=x]\in\mathrm{RED}_\beta$. Hence $M$ is strongly normalizing by (C1). Fix $N\in\mathrm{RED}_\alpha$. We prove $(\lambda x.M)N\in\mathrm{RED}_\beta$ by induction on $h(M)+h(N)$.

Base case. If $M$ and $N$ are normal, the only immediate reduct of $(\lambda x.M)N$ is $M[x:=N]$, which belongs to $\mathrm{RED}_\beta$ by assumption. This application has an outermost constructor permitted by (C3). Thus $(\lambda x.M)N\in\mathrm{RED}_\beta$.

Induction step. Every immediate reduct of $(\lambda x.M)N$ has one of the forms $M[x:=N]$, $(\lambda x.M')N$ with $M\Rightarrow M'$, and $(\lambda x.M)N'$ with $N\Rightarrow N'$. The first is reducible by assumption. In the second, the hypothesis on substitution is preserved, since $M[x:=K]\Rightarrow M'[x:=K]$ for every $K\in\mathrm{RED}_\alpha$, and (C2) applies at type $\beta$. In the third, $N'\in\mathrm{RED}_\alpha$ by (C2). Since the latter two conversions decrease the induction measure, by the induction hypothesis, we obtain reducibility in each of these cases. We complete the induction by applying (C3) at type $\beta$.

If $\alpha=\delta\rightarrow\bot$ and $\beta=\bot$, we also need $\eta(\lambda x.M)\in\mathrm{RED}_\delta$. We prove this by induction on $h(M)$.

Base case. When $M$ is normal, $\eta(\lambda x.M)$ has no immediate reduct. Since its outermost constructor is $\eta$, (C3) at type $\delta$ applies.

Induction step. Every immediate reduct is $\eta(\lambda x.M')$ with $M\Rightarrow M'$. As observed above, the substitution hypothesis holds for $M'$, and $h(M')<h(M)$. By the induction hypothesis, we obtain reducibility of each immediate reduct. Therefore, (C3) at type $\delta$ applies once more.

\item[(vi)] Case modus tollens. Suppose $M\in\mathrm{RED}_{\alpha\rightarrow\beta}$ and $N\in\mathrm{RED}_{\beta\rightarrow\bot}$. By (C1), both terms are strongly normalizing. For every $K\in\mathrm{RED}_\alpha$, the term $\tau(M,N)K$ has no root conversion. Each conversion occurs inside $M$, $N$, or $K$. Since all three are strongly normalizing, $\tau(M,N)K$ is strongly normalizing and hence belongs to $\mathrm{RED}_\bot$.

If $\alpha=\delta\rightarrow\bot$, the result type of $\tau(M,N)$ is a double-negation type. We must also prove $\eta(\tau(M,N))\in\mathrm{RED}_\delta$. We use induction on $h(M)+h(N)$.

Base case. If $M$ and $N$ are normal, $\eta(\tau(M,N))$ has no immediate reduct. Therefore, by (C3) at type $\delta$, we obtain reducibility.

Induction step. Its immediate reducts are $\eta(\tau(M',N))$ with $M\Rightarrow M'$ and $\eta(\tau(M,N'))$ with $N\Rightarrow N'$. By (C2), reducibility of the arguments is preserved, and each conversion decreases the induction measure. Therefore, the induction hypothesis applies to all immediate reducts. By (C3) at type $\delta$, we obtain the claim.

\item[(viii)] Case double-negation introduction. Suppose $M\in\mathrm{RED}_\alpha$. For any $N\in\mathrm{RED}_{\alpha\rightarrow\bot}$, the term $\theta(M)N$ has no root conversion, and conversions occur only in $M$ or $N$. By (C1), both arguments are strongly normalizing. Therefore $\theta(M)N\in\mathrm{RED}_\bot$. We also establish $\eta(\theta(M))\in\mathrm{RED}_\alpha$ by induction on $h(M)$.

Base case. If $M$ is normal, the only immediate reduct of $\eta(\theta(M))$ is $M$, which belongs to $\mathrm{RED}_\alpha$. Therefore, by (C3) at type $\alpha$, we obtain reducibility.

Induction step. The immediate reducts are $M$ and the terms $\eta(\theta(M'))$ with $M\Rightarrow M'$. The first is reducible by assumption. By (C2), we obtain $M'\in\mathrm{RED}_\alpha$, and $h(M')<h(M)$. By the induction hypothesis, we obtain reducibility of all remaining immediate reducts. Therefore, we conclude by (C3) at type $\alpha$.\qedhere
\end{enumerate}
\end{proof}

We can now establish strong normalization using simultaneous substitution and the constructor properties.

\begin{proposition}[Strong normalization]\label{prop:QuLam-SN}
Every well-typed term of QuLam is strongly normalizing under the conversions of Definition~\ref{Conv}.
\end{proposition}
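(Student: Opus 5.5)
The plan is the standard Tait--Girard adequacy argument, using the reducibility sets of Definition~\ref{def:reducibility}. The central claim is the following. Suppose $\Gamma\vdash M:\alpha$ with $\Gamma=\{x_1:\alpha_1,\ldots,x_k:\alpha_k\}$, and let $N_1,\ldots,N_k$ be raw terms with $N_i\in\mathrm{RED}_{\alpha_i}$. Then
\[
M[x_1:=N_1,\ldots,x_k:=N_k]\in\mathrm{RED}_\alpha .
\]
Given this, the proposition follows at once. By (C3) of Lemma~\ref{lem:reducibility-properties}, each variable $x_i$ lies in $\mathrm{RED}_{\alpha_i}$, since it has no immediate reducts and is not a pair, abstraction, or $\theta$ term. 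Taking $N_i=x_i$ gives $M\in\mathrm{RED}_\alpha$, and (C1) gives $M\in\mathrm{SN}$.

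Before the main induction, I would close a gap in Lemma~\ref{lem:reducibility-constructors}, which does not treat $\zeta$ and $\xi$. I claim that if $M\in\mathrm{SN}$, then $\zeta(M)\in\mathrm{RED}_\gamma$ and $\xi(M)\in\mathrm{RED}_\gamma$ for \emph{every} type $\gamma$. The proof is by induction on $h(M)$.
\begin{itemize}
\item No conversion has $\zeta$ or $\xi$ at its root, so the immediate reducts of $\zeta(M)$ are exactly the terms $\zeta(M')$ with $M\Rightarrow M'$.
\item Each such $M'$ is in $\mathrm{SN}$ with $h(M')<h(M)$, so the induction hypothesis puts $\zeta(M')$ in $\mathrm{RED}_\gamma$.
\item The outermost constructor $\zeta$ is not excluded in (C3), so (C3) at type $\gamma$ gives $\zeta(M)\in\mathrm{RED}_\gamma$.
\end{itemize}
The base case $h(M)=0$ is (C3) applied with no reducts. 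The argument for $\xi$ is identical. Since membership in any reducibility set implies $\mathrm{SN}$ by (C1), this covers both Sasaki rules without analysing the shape of the type $\neg(\alpha\times\neg(\alpha\times\beta))$.

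The main claim is proved by induction on the typing derivation, with $\Gamma$ and the substitution $\sigma$ universally quantified. I write $M\sigma$ for the substituted term.
\begin{itemize}
\item \emph{Variable.} $x_i\sigma=N_i$, which is reducible by assumption.
\item \emph{Unary constructors other than abstraction.} For $\varepsilon$, $\pi_i$, $\eta$ and $\theta$, the result follows from the induction hypothesis and items (i), (ii), (vii), (viii) of Lemma~\ref{lem:reducibility-constructors}. For $\zeta$ and $\xi$, it follows from the observation above together with (C1).
\item \emph{Pairing and application.} The premises are typed under consistent $\Gamma_1,\Gamma_2$ with $\Gamma=\Gamma_1\cup\Gamma_2$. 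Restricting $\sigma$ to $\mathrm{dom}(\Gamma_j)$ gives the correct substitution on each component, since by Lemma~\ref{lem:free-var} the free variables of the component lie in $\mathrm{dom}(\Gamma_j)$. Items (iii) and (v) then apply.
\item \emph{Modus tollens.} The first argument $M$ of $\tau(M,N)$ is typed under $\varnothing$, so it is closed and unchanged by $\sigma$. The induction hypothesis with the empty substitution gives $M\in\mathrm{RED}_{\alpha\rightarrow\beta}$. The second argument is handled directly by the induction hypothesis, and item (vi) concludes.
\end{itemize}

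I expect the abstraction case to be the one needing care, because of the restricted rule. Here the premise is $\{x:\alpha\}\vdash M:\beta$. By Lemma~\ref{lem:free-var}, $\lambda x.M$ is closed, so after renaming $x$ away from the $N_i$ we have $(\lambda x.M)\sigma=\lambda x.M$. Now apply the induction hypothesis to the premise, with environment $\{x:\alpha\}$ and the one-element substitution $x:=N$ for an arbitrary $N\in\mathrm{RED}_\alpha$. This gives $M[x:=N]\in\mathrm{RED}_\beta$ for all such $N$, and item (iv) of Lemma~\ref{lem:reducibility-constructors} yields $\lambda x.M\in\mathrm{RED}_{\alpha\rightarrow\beta}$. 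That item already covers the extra $\eta$-clause for types $\neg\neg\delta$. The point that makes this work is that the induction hypothesis must be stated for arbitrary environments and substitutions, not just the ambient $\Gamma$. With that formulation the restricted premise environment actually simplifies the usual argument, since no substitution ever passes under a binder.
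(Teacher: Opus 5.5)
Your proposal is correct and follows the paper's proof essentially step for step: the same substitution statement over $\mathrm{RED}$ proved by induction on typing derivations, the same handling of abstraction via closedness and item (iv), and the same conclusion via (C3) for variables and then (C1). The only difference is organizational: you isolate $\zeta$ and $\xi$ in a lemma that needs only $M\in\mathrm{SN}$, so the result is reducible at every type. The paper instead runs the same inner induction on $h(N\sigma)$ inside the main proof, invoking (C2) at the premise type where bare strong normalization would already suffice.
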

\begin{proof}
We prove a stronger substitution statement by induction on a typing derivation. Suppose $\{x_1:\alpha_1,\ldots,x_k:\alpha_k\}\vdash M:\gamma$ and $N_i\in\mathrm{RED}_{\alpha_i}$ for every $i$. We set $\sigma=[x_1:=N_1,\ldots,x_k:=N_k]$ and prove $M\sigma\in\mathrm{RED}_\gamma$.

Base case. If $M=x_i$, then $M\sigma=N_i$, which is reducible by assumption.

Induction step. We proceed by cases according to the form of the last typing rule.

\begin{enumerate}[(i)]
\setlength{\parskip}{\medskipamount}
\item Case ex falso. If $M=\varepsilon(N)$, by the induction hypothesis, we obtain $N\sigma\in\mathrm{RED}_\bot$. The claim follows from part (i) of Lemma~\ref{lem:reducibility-constructors}.

\item Case projections. By the induction hypothesis, we obtain reducibility of the pair term after substitution. Part (ii) of Lemma~\ref{lem:reducibility-constructors} applies to the indicated projection.

\item Case pairing. By the induction hypotheses, we obtain reducibility of both components under the restrictions of $\sigma$ to their respective environments. Part (iii) of Lemma~\ref{lem:reducibility-constructors} applies to their pair.

\item Case abstraction. The body $N$ is typed under $\{x:\alpha\}$. For any $K\in\mathrm{RED}_\alpha$, by the induction hypothesis with $[x:=K]$, we obtain $N[x:=K]\in\mathrm{RED}_\beta$. Since the abstraction $\lambda x.N$ is closed, $M\sigma=\lambda x.N$ up to renaming of bound variables. By part (iv) of Lemma~\ref{lem:reducibility-constructors}, we obtain the conclusion.

\item Case application. By the induction hypotheses, we obtain reducibility of the function and argument after substitution. Part (v) of Lemma~\ref{lem:reducibility-constructors} applies.

\item Case modus tollens. The first argument of $\tau$ is typed in the empty environment and is unchanged by $\sigma$. By the induction hypotheses, we obtain reducibility of both arguments. By part (vi) of Lemma~\ref{lem:reducibility-constructors}, we obtain reducibility of their $\tau$ term.

\item Case double-negation elimination. By the induction hypothesis, we obtain a reducible argument of double-negation type. Part (vii) of Lemma~\ref{lem:reducibility-constructors} applies.

\item Case double-negation introduction. By the induction hypothesis, we obtain a reducible argument. Part (viii) of Lemma~\ref{lem:reducibility-constructors} applies.

\item Case Sasaki hook introduction. Suppose $M=\zeta(N)$ and the conclusion type is $\gamma=\alpha\rightarrow\beta$. By the induction hypothesis, we obtain $N\sigma\in\mathrm{RED}_{\neg(\alpha\times\neg(\alpha\times\beta))}$. We prove $\zeta(N\sigma)\in\mathrm{RED}_{\alpha\rightarrow\beta}$ by an inner induction on $h(N\sigma)$.

Base case. If $N\sigma$ is normal, $\zeta(N\sigma)$ has no immediate reduct. Since its outermost constructor satisfies the condition of (C3), it belongs to the required reducibility set.

Induction step. Every immediate reduct is $\zeta(N')$ with $N\sigma\Rightarrow N'$. By (C2), $N'$ remains reducible at its original type, and $h(N')<h(N\sigma)$. By the inner induction hypothesis, we obtain $\zeta(N')\in\mathrm{RED}_{\alpha\rightarrow\beta}$. We complete this case by applying (C3).

\item Case Sasaki hook elimination. Suppose $M=\xi(N)$ and the conclusion type is $\gamma=\neg(\alpha\times\neg(\alpha\times\beta))$. By the induction hypothesis, we obtain $N\sigma\in\mathrm{RED}_{\alpha\rightarrow\beta}$. We prove that $\xi(N\sigma)$ is reducible at the conclusion type by an inner induction on $h(N\sigma)$.

Base case. If $N\sigma$ is normal, $\xi(N\sigma)$ has no immediate reduct, and (C3) applies.

Induction step. Every immediate reduct is $\xi(N')$ with $N\sigma\Rightarrow N'$. By (C2), the reduct $N'$ is reducible and has smaller maximal conversion length. The inner induction hypothesis applies to $\xi(N')$. By (C3), we obtain the conclusion.
\end{enumerate}

By (C3), every variable belongs to its corresponding reducibility set. Taking $N_i=x_i$ in the substitution statement, we obtain $M\in\mathrm{RED}_\gamma$. By (C1), we obtain strong normalization.
\end{proof}

\section{Curry--Howard Correspondence}
\label{sec:CH}

In this section, we relate NQ derivations to typing derivations in $\QuLam$ through the Curry--Howard correspondence~\cite{Gira1989,Howa1980,Sore2006}. We follow the term assignment in~\cite{Toku2022} for the rules shared with that work. We specify the clauses corresponding to ${\rightarrow}\mathrm I_S$ and ${\rightarrow}\mathrm E_S$ in terms of the constructors $\zeta$ and $\xi$, respectively.

\subsection{Types as Propositions}
\label{sec:typeasprop}

We first describe the correspondence between formulas of NQ and types of $\QuLam$. Let $\mathrm{PropVar}$ be the set of propositional variables and $\mathrm{TypeVar}$ the set of type variables. Since both sets are countably infinite, we can fix a bijection $\Phi:\mathrm{PropVar}\to\mathrm{TypeVar}$. Let $\mathrm{Form}$ be the set of all formulas of NQ and $\mathrm{Type}$ the set of all types of $\QuLam$. We extend $\Phi$ by structural recursion to obtain mutually inverse maps $\Phi:\mathrm{Form}\to\mathrm{Type}$ and $\Phi^{-1}:\mathrm{Type}\to\mathrm{Form}$ that preserve the structure of formulas and types as follows:

\begin{itemize}
\item $\Phi(\bot):=\bot$.
\item $\Phi(\alpha\wedge\beta):=\Phi(\alpha)\times\Phi(\beta)$.
\item $\Phi(\alpha\rightarrow\beta):=\Phi(\alpha)\rightarrow\Phi(\beta)$.
\end{itemize}
\begin{itemize}
\item For each type variable $\alpha$, $\Phi^{-1}(\alpha)$ is the corresponding propositional variable.
\item $\Phi^{-1}(\bot):=\bot$.
\item $\Phi^{-1}(\alpha\times\beta):=\Phi^{-1}(\alpha)\wedge\Phi^{-1}(\beta)$.
\item $\Phi^{-1}(\alpha\rightarrow\beta):=\Phi^{-1}(\alpha)\rightarrow\Phi^{-1}(\beta)$.
\end{itemize}

\subsection{Terms as Proofs}
\label{sec:termasproof}

We next describe the correspondence between proofs in NQ and terms in $\QuLam$. Let $\mathrm{Label}$ be the set of labels used for assumptions in NQ, and let $\mathrm{Var}$ be the set of term variables in $\QuLam$. Since both sets are countably infinite, we fix a bijection $\Psi:\mathrm{Label}\to\mathrm{Var}$. We use the same symbol $\Psi$ for the recursively defined map from labeled NQ derivations to QuLam terms. For a finite set $\Gamma$ of labeled assumptions, we write $\Psi(\Gamma)$ for the environment $\{\Psi(u):\Phi(\beta)\mid u:\beta\in\Gamma\}$.

We regard two labeled NQ derivations as equivalent if they differ only in the names of discharged assumption labels. We also regard two QuLam typing derivations as equivalent if they differ only in the names of bound term variables.

We define a \emph{canonical typing derivation} as a derivation obtained recursively from a labeled NQ derivation by assigning the corresponding QuLam rule to each NQ rule. When constructing a canonical QuLam typing derivation, we associate each NQ subderivation with a QuLam typing judgment. The environment of this typing judgment consists of the declarations corresponding to the undischarged assumptions of the NQ subderivation, except for the extensions described below.

The implication introduction rule ${\rightarrow}\mathrm I$ corresponds to $\lambda$-abstraction. In NQ, a derivation of $\beta$ used for this rule cannot have any undischarged assumptions other than the labeled assumption $u:\alpha$. The corresponding QuLam derivation has the singleton environment $\{x:\Phi(\alpha)\}$, where $x=\Psi(u)$, and assigns type $\Phi(\beta)$ to the body $M$. The result is the closed term $\lambda x.M:\Phi(\alpha)\rightarrow\Phi(\beta)$. The rule ${\rightarrow}\mathrm I_S$ also introduces an implication, but it does not discharge any assumptions.

If the implication introduction rule ${\rightarrow}\mathrm I$ discharges an assumption $u:\alpha$ that is not used in its premise derivation, the corresponding QuLam term $M$ is initially typed under the empty environment. However, the QuLam abstraction rule requires $M$ to be typed under the environment $\{x:\Phi(\alpha)\}$, where $x=\Psi(u)$. We therefore use Lemma~\ref{lem:weakening} to extend the environment of the typing derivation of $M$. This operation may also extend the environments of intermediate typing judgments in that derivation. In canonical typing derivations, we allow only the extensions required by this procedure.

The rules concerning negation and double negation correspond to the term constructors $\tau$, $\eta$, and $\theta$. The modus tollens rule {\rm MT} of NQ combines a closed proof of $\alpha\rightarrow\beta$ with a proof of $\beta\rightarrow\bot$ under assumptions $\Gamma$ to derive $\alpha\rightarrow\bot$ under the same assumptions $\Gamma$. In $\QuLam$, this inference is represented by the constructor $\tau$. More precisely, given a closed term $M:\Phi(\alpha)\rightarrow\Phi(\beta)$ and a term $N:\Phi(\beta)\rightarrow\bot$ under $\Psi(\Gamma)$, we obtain a term $\tau(M,N):\Phi(\alpha)\rightarrow\bot$ under $\Psi(\Gamma)$.

The double-negation introduction and elimination rules ${\neg\neg}{\rm I}$ and ${\neg\neg}{\rm E}$ correspond to the term constructors $\theta$ and $\eta$, respectively. For example, by applying the double-negation elimination rule ${\neg\neg}{\rm E}$ to a proof $\mathcal D$ of $\neg\neg\alpha$ under the assumptions $\Gamma$, we obtain a proof of $\alpha$ under $\Gamma$. In $\QuLam$, this situation is reflected by the typing judgments $\Psi(\Gamma)\vdash\Psi(\mathcal D):(\Phi(\alpha)\rightarrow\bot)\rightarrow\bot$ and $\Psi(\Gamma)\vdash\eta(\Psi(\mathcal D)):\Phi(\alpha)$.

\begin{proposition}[Derivation correspondence]\label{prop:derivation-correspondence}
When we identify equivalent derivations, the map that sends each labeled NQ derivation to its canonical QuLam typing derivation is a bijection. If an NQ derivation $\mathcal D$ of $\alpha$ has undischarged assumptions $\Gamma$, then the corresponding typing derivation concludes with $\Psi(\Gamma)\vdash\Psi(\mathcal D):\Phi(\alpha)$.
\end{proposition}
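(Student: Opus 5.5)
We prove the two assertions together by structural induction on labeled NQ derivations, and then construct an inverse map by structural induction on canonical typing derivations. First, we define the canonical typing derivation $\Psi(\mathcal D)$ by recursion on $\mathcal D$, following the table of constructors, and check its conclusion. For ASM with $u:\alpha$ we obtain $\{\Psi(u):\Phi(\alpha)\}\vdash\Psi(u):\Phi(\alpha)$. For the unary rules EFQ, $\wedge\mathrm E_l$, $\wedge\mathrm E_r$, $\neg\neg\mathrm I$, $\neg\neg\mathrm E$, ${\rightarrow}\mathrm I_S$ and ${\rightarrow}\mathrm E_S$, the undischarged assumptions are unchanged. The type of the premise is obtained by applying $\Phi$ to the premise formula, since $\Phi$ commutes with $\wedge/\times$, $\rightarrow$ and $\bot$, and hence $\Phi(\neg(\alpha\wedge\neg(\alpha\wedge\beta)))=\neg(\Phi\alpha\times\neg(\Phi\alpha\times\Phi\beta))$. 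For $\wedge\mathrm I$ and ${\rightarrow}\mathrm E$, the undischarged assumptions of the conclusion are the union of those of the premises. Labels assign unique formulas, so the environments $\Psi(\Gamma_1)$ and $\Psi(\Gamma_2)$ are consistent and their union is $\Psi(\Gamma_1\cup\Gamma_2)$. For MT, the first premise is closed, which matches the $\tau$ rule with empty environment. For ${\rightarrow}\mathrm I$, the premise has undischarged assumptions contained in $\{u:\alpha\}$. If $u$ is actually used, the induction hypothesis gives exactly $\{x:\Phi\alpha\}$. If $u$ is unused, we weaken by Lemma~\ref{lem:weakening}, as the canonical construction prescribes. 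The conclusion has no undischarged assumptions, and its typing has empty environment, as required.

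Next, we check that $\Psi$ respects the equivalences. Renaming a discharged label $u$ to $u'$ in $\mathcal D$ renames the bound variable $\Psi(u)$ to $\Psi(u')$ in the abstraction, because $\Psi$ is a bijection on labels. Hence equivalent NQ derivations go to equivalent typing derivations, and the map is well defined on classes.

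For bijectivity, we define an inverse map on canonical typing derivations by recursion on the last rule. Each QuLam typing rule corresponds to exactly one NQ rule, and each constructor determines its rule uniquely; for instance $\zeta$ and $\xi$ are distinct from $\lambda$ and application. Thus the last rule of a canonical derivation determines the last NQ rule, and the premises determine the premise derivations by the induction hypothesis. We apply $\Phi^{-1}$ to the types and $\Psi^{-1}$ to the variables. Both composites are identities by induction, using that $\Phi$ and $\Phi^{-1}$ are mutually inverse and that $\Psi$ is a bijection on labels.

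The main obstacle is the weakening step for vacuous discharge together with the ambient environment of abstraction. A canonical derivation may contain enlarged environments in its intermediate judgments, so we must ensure that the inverse map reads off undischarged assumptions correctly rather than from those enlarged environments. We handle this by restricting the inverse to canonical derivations, where the only extensions are the prescribed ones. In that setting the inverse simply ignores environments and recovers the NQ derivation from the term structure and the types. We then show by induction that the unique canonical derivation recomputes these extensions deterministically from the NQ derivation: for each vacuously discharged $u$, it adds exactly the declaration $\Psi(u):\Phi(\alpha)$. Consequently, two canonical derivations with the same underlying term skeleton and types coincide, which gives injectivity of the inverse and hence of $\Psi$.
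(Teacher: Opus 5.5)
Your proposal is correct and takes essentially the same approach as the paper. Both define the canonical derivation recursively, build an inverse by recursion on the last typing rule, and check that both composites are identities. The only delicate point in both is the weakening for vacuous discharge: the paper strips those environment extensions on the way back, while your inverse ignores environments altogether, which amounts to the same thing. One small correction to your last paragraph: injectivity of $\Psi$ already follows from the left-inverse identity, and what your determinacy argument actually secures is the other composite, i.e.\ that a canonical derivation is recovered from its term skeleton and types.
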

\begin{proof}
We prove that the two transformations are inverses of each other. We use induction on the height of an NQ derivation in one direction and on the height of a canonical QuLam typing derivation in the other.

Base case. An NQ assumption $u:\alpha$ corresponds to the QuLam variable $\Psi(u):\Phi(\alpha)$. Both transformations return the same labeled assumption or declaration.

Induction step. We proceed by cases according to the last rule applied. In each case, we apply the induction hypotheses to the premise derivations.

\begin{enumerate}[(i)]
\item Case implication introduction. In NQ, the premise derivation has no undischarged assumptions other than the assumption discharged by ${\rightarrow}\mathrm I$. In QuLam, the body of the corresponding abstraction is typed under an environment containing only the declaration corresponding to that assumption. If the assumption is not used in the NQ premise derivation, we use weakening to type the QuLam term under this environment. This restriction is preserved in both directions of the correspondence, and renaming the discharged NQ assumption label corresponds to renaming the bound variable in QuLam. When transforming the QuLam typing derivation back into an NQ derivation, we remove the environment extensions introduced when the discharged assumption is not used.

\item Case Sasaki hook. The rules ${\rightarrow}\mathrm I_S$ and ${\rightarrow}\mathrm E_S$ correspond to $\zeta$ and $\xi$, respectively. In particular, $\Psi(({\rightarrow}\mathrm I_S)(\mathcal D))=\zeta(\Psi(\mathcal D))$ and $\Psi(({\rightarrow}\mathrm E_S)(\mathcal D))=\xi(\Psi(\mathcal D))$. In the reverse transformation, we apply ${\rightarrow}\mathrm I_S$ when the outermost constructor of the QuLam term is $\zeta$ and ${\rightarrow}\mathrm E_S$ when it is $\xi$. Neither NQ rule discharges any assumptions.

\item Case remaining rules. The NQ rules EFQ, conjunction introduction, conjunction elimination, implication elimination, MT, double-negation introduction, and double-negation elimination correspond to the QuLam constructors $\varepsilon$, pairing, projections, application, $\tau$, $\theta$, and $\eta$, respectively. The first premise of MT has no undischarged assumptions, and the first premise of the QuLam typing rule for $\tau$ is typed under the empty environment.\qedhere
\end{enumerate}
\end{proof}

\subsection{Constructors for the Sasaki Hook}
\label{sec:sasaki-constructors}

The constructors $\zeta$ and $\xi$ correspond to the NQ rules ${\rightarrow}\mathrm I_S$ and ${\rightarrow}\mathrm E_S$, respectively. Their typing rules are $\Gamma\vdash M:\neg(\alpha\times\neg(\alpha\times\beta))\Longrightarrow\Gamma\vdash\zeta(M):\alpha\rightarrow\beta$ and $\Gamma\vdash N:\alpha\rightarrow\beta\Longrightarrow\Gamma\vdash\xi(N):\neg(\alpha\times\neg(\alpha\times\beta))$. In the typing rule for $\zeta$, the premise and conclusion have the same environment $\Gamma$. In contrast, the body of a $\lambda$-abstraction must be typed under an environment containing a single declaration.

\begin{example}[Term for the rule MP]\label{ex:orthomodular-term}
The closed term $\lambda z.\,(\zeta(\pi_2 z))(\pi_1 z)$ has type $(\alpha\times\neg(\alpha\times\neg(\alpha\times\beta)))\rightarrow\beta$ in QuLam. To type the body of this abstraction, we assume $z:\alpha\times\neg(\alpha\times\neg(\alpha\times\beta))$. Thus $\zeta(\pi_2 z):\alpha\rightarrow\beta$ and $\pi_1z:\alpha$, and we obtain a term of type $\beta$ by application. This term corresponds to the rule MP. In the proof of Lemma~\ref{lem:GOM-to-NQ}, MP is used in the derivation that replaces the O-modular rule.
\end{example}

\begin{example}[Implication from a conjunction]\label{ex:conjunction-implication}
Assume $z:\alpha\times\beta$. Since $\pi_2x:\neg(\alpha\times\beta)$ under the assumption $x:\alpha\times\neg(\alpha\times\beta)$, the closed term $\lambda x.\pi_2x$ has type $(\alpha\times\neg(\alpha\times\beta))\rightarrow\neg(\alpha\times\beta)$. Moreover, $\theta(z)$ has type $\neg\neg(\alpha\times\beta)$. By the typing rule for $\tau$, we obtain $\tau(\lambda x.\pi_2x,\theta(z)):\neg(\alpha\times\neg(\alpha\times\beta))$. Applying the typing rule for $\zeta$, we obtain $z:\alpha\times\beta\vdash\zeta(\tau(\lambda x.\pi_2x,\theta(z))):\alpha\rightarrow\beta$.
\end{example}

\section{Orthomodular Lattice Semantics}
\label{sec:semantics}

In this section, we present a semantics for typing judgments in $\QuLam$ using orthomodular lattices. We then prove soundness and completeness with respect to this semantics.

\subsection{Interpretation of Types}
\label{sec:semantics-types}

We interpret the types of $\QuLam$ in orthomodular lattices, as we do the formulas of NQ.

We now define the interpretation of types inductively from a valuation of type variables according to the structure of types.

\begin{definition}[Interpretation of types]\label{def:interpretation}
Let $L$ be an orthomodular lattice, and let $\rho:\mathrm{TypeVar}\rightarrow L$ be a valuation of type variables. We define an element $\llbracket\alpha\rrbracket_\rho\in L$ for every type $\alpha$ as follows:

\begin{itemize}
\item $\llbracket\alpha\rrbracket_\rho=\rho(\alpha)$ for $\alpha\in\mathrm{TypeVar}$.
\item $\llbracket\bot\rrbracket_\rho=0$.
\item $\llbracket\alpha\times\beta\rrbracket_\rho=\llbracket\alpha\rrbracket_\rho\wedge\llbracket\beta\rrbracket_\rho$.
\item $\llbracket\alpha\rightarrow\beta\rrbracket_\rho=\llbracket\alpha\rrbracket_\rho\rightarrow_S\llbracket\beta\rrbracket_\rho=\llbracket\alpha\rrbracket_\rho^\perp\vee(\llbracket\alpha\rrbracket_\rho\wedge\llbracket\beta\rrbracket_\rho)$.
\end{itemize}
\end{definition}

By this definition, for every type $\alpha$, the value $\llbracket\alpha\rrbracket_\rho$ is uniquely determined as an element of $L$, and we obtain a map $\llbracket-\rrbracket_\rho:\mathrm{Type}\rightarrow L$ that preserves the structure of types. In what follows, when the orthomodular lattice $L$ and the valuation $\rho$ of type variables are clear from the context, we omit the subscript and write $\llbracket\alpha\rrbracket$ instead of $\llbracket\alpha\rrbracket_\rho$.

Next, we prove that the interpretations of formulas and types coincide under corresponding valuations.

\begin{lemma}[Transport of valuations]\label{lem:transport}
Let $L$ be an orthomodular lattice, and $\rho$ a valuation of type variables. We put $\nu(p)=\rho(\Phi(p))$ for every propositional variable $p$. Then $\llbracket\alpha\rrbracket_\nu=\llbracket\Phi(\alpha)\rrbracket_\rho$ for every NQ formula $\alpha$. The same equality holds for the meet of the values of any finite set of corresponding assumptions.
\end{lemma}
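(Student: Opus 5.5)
We argue by structural induction on the NQ formula $\alpha$, using that $\Phi$ is defined by structural recursion and that both interpretations, $\llbracket-\rrbracket_\nu$ from Definition~\ref{def:formula-interpretation} and $\llbracket-\rrbracket_\rho$ from Definition~\ref{def:interpretation}, are defined clause by clause by the same lattice operations. The statement about finite sets of assumptions then follows by taking meets.

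In the base cases, if $\alpha=p$ is a propositional variable, then $\Phi(p)$ is a type variable. Hence $\llbracket\Phi(p)\rrbracket_\rho=\rho(\Phi(p))=\nu(p)=\llbracket p\rrbracket_\nu$ by the definition of $\nu$. If $\alpha=\bot$, then $\Phi(\bot)=\bot$ and both sides equal $0$.

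In the induction step, take $\alpha=\beta\wedge\gamma$. Then $\Phi(\alpha)=\Phi(\beta)\times\Phi(\gamma)$, so
\[
\llbracket\Phi(\alpha)\rrbracket_\rho=\llbracket\Phi(\beta)\rrbracket_\rho\wedge\llbracket\Phi(\gamma)\rrbracket_\rho=\llbracket\beta\rrbracket_\nu\wedge\llbracket\gamma\rrbracket_\nu=\llbracket\alpha\rrbracket_\nu
\]
by the induction hypotheses. For $\alpha=\beta\rightarrow\gamma$ we have $\Phi(\alpha)=\Phi(\beta)\rightarrow\Phi(\gamma)$, and both interpretations use the Sasaki hook $\rightarrow_S$. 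Thus the same computation applies with $\rightarrow_S$ in place of $\wedge$. Negation needs no separate case, since $\neg\beta$ abbreviates $\beta\rightarrow\bot$.

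For the second claim, let $\Gamma$ be a finite set of formulas, or of labeled assumptions, and let the corresponding environment be $\Psi(\Gamma)$, or simply $\Phi(\Gamma)$. The value of each side is the meet over the underlying formulas, and labels are ignored by Definition~\ref{def:formula-interpretation}. Applying the first claim elementwise gives equal meets, with the empty case giving $1$ on both sides. We must check that the meet for the type side is taken in the same way, over the types $\Phi(\beta)$ for $u:\beta\in\Gamma$. Since $\Phi$ is a bijection on formulas and types, and the meet of a finite set does not depend on how duplicates arise, the two meets range over the same elements.

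No step is a real obstacle. The only point needing care is this bookkeeping in the second claim: distinct labels may carry the same formula, and the meet must be understood over values, which is idempotent.
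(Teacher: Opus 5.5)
Your proof is correct and follows the paper's argument: structural induction on $\alpha$, with the variable and $\bot$ base cases, the conjunction and implication cases handled by the matching lattice operations, and the finite-set claim obtained elementwise, with the empty meet equal to $1$ on both sides. Your extra remark that distinct labels may carry the same formula, which idempotence of the meet absorbs, is harmless bookkeeping that the paper leaves implicit.
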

\begin{proof}
We proceed by induction on the construction of $\alpha$.

Base case. The claim follows from the definition of $\nu$ for propositional variables. Both interpretations assign $0$ to $\bot$.

Induction step. We proceed by cases according to the form of $\alpha$.

\begin{enumerate}[(i)]
\item Case conjunction. Both interpretations use the meet. We apply the induction hypotheses to the respective operands.

\item Case implication. Both interpretations use the Sasaki hook. We apply the induction hypotheses to the respective operands.
\end{enumerate}

For finite sets, the equalities for their individual elements imply that the meets of the respective values are also equal. The empty set has value $1$ on both sides.
\end{proof}

\subsection{Semantics of Typing Judgments}
\label{sec:semantics-contexts}

In this subsection, we interpret environments and typing judgments with respect to the same orthomodular lattice $L$ and valuation $\rho$ of type variables as in the preceding subsection.

First, we define the interpretation of an environment. Since an environment $\Gamma$ is a finite partial map, it can be represented as the finite set $\{x_1:\alpha_1,\dots,x_n:\alpha_n\}$. We define its interpretation as the meet $\llbracket\Gamma\rrbracket_\rho=\llbracket\alpha_1\rrbracket_\rho\wedge\cdots\wedge\llbracket\alpha_n\rrbracket_\rho$. For the empty environment, we set $\llbracket\varnothing\rrbracket_\rho=1$ as the meet of the empty family. The interpretation $\llbracket\Gamma\rrbracket_\rho$ does not depend on the order in which the elements of $\Gamma$ are listed or on the names of the variables. Thus an environment is interpreted by an element $\llbracket\Gamma\rrbracket_\rho\in L$, which coincides with the interpretation, on the NQ side, of the conjunction of the assumptions corresponding to $\Gamma$.

Next, we interpret a typing judgment $\Gamma\vdash M:\alpha$ through the order relation between the interpretations of its environment and type. We say that the typing judgment $\Gamma\vdash M:\alpha$ is \emph{valid} in $(L,\rho)$ when the inequality $\llbracket\Gamma\rrbracket_\rho\leq\llbracket\alpha\rrbracket_\rho$ holds. In particular, since the interpretation of the empty environment is $1$, a closed typing judgment $\vdash M:\alpha$ is valid in $(L,\rho)$ if and only if $\llbracket\alpha\rrbracket_\rho=1$.

By Lemma~\ref{lem:transport}, this inequality coincides, through the inverse correspondence $\Phi^{-1}:\mathrm{Type}\rightarrow\mathrm{Form}$ of Section~\ref{sec:CH}, with the inequality holding for the corresponding NQ formulas under the same orthomodular lattice and the corresponding valuation.

\subsection{Soundness of Typing Judgments}
\label{sec:soundness}

In this subsection, we prove that $\QuLam$ is sound for the semantics introduced in Sections~\ref{sec:semantics-types} and~\ref{sec:semantics-contexts}.

\begin{theorem}[Soundness]\label{thm:soundness}
Let $L$ be an orthomodular lattice, and let $\rho:\mathrm{TypeVar}\to L$ be a valuation of type variables. If $\Gamma\vdash M:\alpha$ is derivable in QuLam, then $\llbracket\Gamma\rrbracket_\rho\leq\llbracket\alpha\rrbracket_\rho$. In particular, if $\vdash M:\alpha$ is derivable in QuLam, then $\llbracket\alpha\rrbracket_\rho=1$.
\end{theorem}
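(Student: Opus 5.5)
We plan a direct induction on the typing derivation of $\Gamma\vdash M:\alpha$, showing $\llbracket\Gamma\rrbracket\leq\llbracket\alpha\rrbracket$ at each node for the fixed $L$ and $\rho$. We do this without passing through NQ, since the Sasaki adjunction (Lemma~\ref{lem:sasaki-adjunction}) handles the implication cases. An alternative route, which we keep as a cross-check, is to map the typing derivation to an NQ derivation via Proposition~\ref{prop:derivation-correspondence} (after weakening, it is canonical up to environment extensions) and then apply Theorem~\ref{thm:NQ-corrected-completeness} with Lemma~\ref{lem:transport}. The direct induction avoids any subtlety about non-canonical weakenings. The closed case is immediate once the general inequality holds, since $\llbracket\varnothing\rrbracket=1$ forces $\llbracket\alpha\rrbracket=1$.

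The structural cases are routine.
\begin{itemize}
\item A variable $x:\alpha\in\Gamma$ gives $\llbracket\Gamma\rrbracket\leq\llbracket\alpha\rrbracket$ as a meet bound.
\item For $\varepsilon$, we have $\llbracket\Gamma\rrbracket\leq0$.
\item Projections and pairing use the meet. For pairing, $\llbracket\Gamma_1\cup\Gamma_2\rrbracket\leq\llbracket\Gamma_i\rrbracket$ because the union has more conjuncts.
\item For abstraction, the premise gives $\llbracket\alpha\rrbracket\leq\llbracket\beta\rrbracket$. Then $\varphi_a(1)=a\leq a\wedge b\leq b$, so by the adjunction $1\leq a\rightarrow_S b$. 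Hence the conclusion holds under any ambient $\Gamma$.
\item For application, put $c=\llbracket\Gamma_1\cup\Gamma_2\rrbracket$. Then $c\leq a\rightarrow_S b$ and $c\leq a$. By the adjunction, $\varphi_a(c)\leq b$. Since $c\leq a$, we have $\varphi_a(c)=a\wedge(a^\perp\vee c)\geq c$, so $c\leq b$.
\item For $\zeta$ and $\xi$, the premise and conclusion types have equal values, since $(a\wedge(a\wedge b)^\perp)^\perp\wedge\ldots$ reduces as follows. Write $\neg x$ as $x\rightarrow_S0=x^\perp$. Then $\llbracket\neg(\alpha\times\neg(\alpha\times\beta))\rrbracket=(a\wedge(a\wedge b)^\perp)^\perp=a^\perp\vee(a\wedge b)$, which is exactly $a\rightarrow_S b$. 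This was already noted after the definition of $H$.
\end{itemize}

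The negation cases rest on the same identity $\llbracket\neg\alpha\rrbracket=a^\perp\vee(a\wedge0)=a^\perp$.
\begin{itemize}
\item For $\theta$ and $\eta$, we use $a^{\perp\perp}=a$.
\item For $\tau$, the closed premise gives $1\leq a\rightarrow_S b$, hence $\varphi_a(1)=a\leq b$ by the adjunction. Antitonicity then yields $b^\perp\leq a^\perp$. Combined with $\llbracket\Gamma\rrbracket\leq b^\perp$, this gives $\llbracket\Gamma\rrbracket\leq\llbracket\neg\alpha\rrbracket$.
\end{itemize}

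The main point requiring care is the application case, because the Sasaki hook does not satisfy modus ponens in the naive form $a\wedge(a\rightarrow_S b)\leq b$ unless it is read through the Sasaki projection. We expect the inequality $c\leq\varphi_a(c)$ for $c\leq a$ (indeed, then $a^\perp\vee c\geq c$ and $a\geq c$) together with Lemma~\ref{lem:sasaki-adjunction} to settle it. Equivalently, $a\wedge(a^\perp\vee(a\wedge b))=a\wedge b$ by orthomodularity, as in the proof of that lemma, combined with monotonicity of $\varphi_a$.
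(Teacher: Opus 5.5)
Your proposal is correct and follows the paper's proof: it uses the same direct induction on typing derivations with the same cases. The only deviation is that you route abstraction and application through Lemma~\ref{lem:sasaki-adjunction}, whereas the paper computes directly that $a^\perp\vee(a\wedge b)=a^\perp\vee a=1$ and $c\leq a\wedge(a\rightarrow_S b)=a\wedge b$; both rest on the same orthomodular identity, and since that identity gives $a\wedge(a\rightarrow_S b)=a\wedge b\leq b$ outright, your closing claim that the naive modus ponens fails is inaccurate (though harmless to the argument).
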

\begin{proof}
We proceed by induction on a QuLam typing derivation. All interpretations below are taken in the orthomodular lattice $L$ under the valuation $\rho$.

Base case. If the last rule types a variable $x:\alpha$ under $\Gamma$, then $x:\alpha\in\Gamma$. Consequently, $\llbracket\Gamma\rrbracket_\rho\leq\llbracket\alpha\rrbracket_\rho$.

Induction step. We proceed by cases according to the form of the last typing rule.

\begin{enumerate}[(i)]
\item Case ex falso. By the induction hypothesis, we obtain $\llbracket\Gamma\rrbracket_\rho\leq0$. Therefore, $\llbracket\Gamma\rrbracket_\rho=0\leq\llbracket\alpha\rrbracket_\rho$ holds for every conclusion type $\alpha$.

\item Case projections. Suppose the premise has type $\alpha\times\beta$. By the induction hypothesis, $\llbracket\Gamma\rrbracket_\rho\leq\llbracket\alpha\rrbracket_\rho\wedge\llbracket\beta\rrbracket_\rho$. The required inequality follows for either projection.

\item Case pairing. Let $\Gamma=\Gamma_1\cup\Gamma_2$, and suppose the two premise environments are consistent. Put $a=\llbracket\alpha\rrbracket_\rho$ and $b=\llbracket\beta\rrbracket_\rho$. Since $\llbracket\Gamma\rrbracket_\rho$ is below the interpretations of both premise environments, the induction hypotheses imply $\llbracket\Gamma\rrbracket_\rho\leq a\wedge b=\llbracket\alpha\times\beta\rrbracket_\rho$.

\item Case abstraction. The premise is typed under $\{x:\alpha\}$ and has conclusion type $\beta$. By the induction hypothesis, we have $a\leq b$, where $a=\llbracket\alpha\rrbracket_\rho$ and $b=\llbracket\beta\rrbracket_\rho$. Therefore, $a\rightarrow_S b=a^\perp\vee(a\wedge b)=1$. Moreover, the abstraction has type $\alpha\rightarrow\beta$ under any permitted ambient environment $\Gamma$, and $\llbracket\Gamma\rrbracket_\rho\leq1$ holds.

\item Case application. Suppose the function and argument have types $\alpha\rightarrow\beta$ and $\alpha$ under consistent environments $\Gamma_1$ and $\Gamma_2$, respectively. We write $c=\llbracket\Gamma_1\cup\Gamma_2\rrbracket_\rho$, $a=\llbracket\alpha\rrbracket_\rho$, and $b=\llbracket\beta\rrbracket_\rho$. By the induction hypotheses, we obtain $c\leq a\rightarrow_S b$ and $c\leq a$. Therefore, using the orthomodularity condition, we have $c\leq a\wedge(a\rightarrow_S b)=a\wedge b\leq b$.

\item Case modus tollens. From the closed first premise, we obtain $1\leq a\rightarrow_S b$ for the values $a,b$ of its antecedent and consequent types $\alpha$ and $\beta$. Since $\varphi_a(1)=a$, we obtain $a\leq b$ by Lemma~\ref{lem:sasaki-adjunction}. By the induction hypothesis for the second premise, we obtain $\llbracket\Gamma\rrbracket_\rho\leq b^\perp$. Therefore, $\llbracket\Gamma\rrbracket_\rho\leq a^\perp=\llbracket\neg\alpha\rrbracket_\rho$ holds.

\item Case double negation. Orthocomplementation satisfies $(a^\perp)^\perp=a$. Therefore, the interpretation of $\neg\neg\alpha$ equals that of $\alpha$, and the induction hypothesis applies to both $\eta$ and $\theta$.

\item Case Sasaki hook. In an orthomodular lattice, $\bigl(a\wedge(a\wedge b)^\perp\bigr)^\perp=a^\perp\vee(a\wedge b)=a\rightarrow_S b$. Therefore, the premise and conclusion types of each rule for $\zeta$ or $\xi$ have equal interpretations. By the induction hypothesis, we obtain the required inequality.
\end{enumerate}

These cases cover all typing rules. If $\Gamma=\varnothing$, then $\llbracket\Gamma\rrbracket_\rho=1$, and the final assertion follows.
\end{proof}

Next, we prove completeness of QuLam typing judgments using the completeness of NQ and Proposition~\ref{prop:derivation-correspondence}.

\begin{corollary}[Completeness of typing judgments]\label{cor:typing-completeness}
Suppose $\llbracket\Gamma\rrbracket_\rho\leq\llbracket\alpha\rrbracket_\rho$ holds for every orthomodular lattice $L$ and every valuation $\rho$ of type variables. Then $\Gamma\vdash M:\alpha$ is derivable for some QuLam term $M$.
\end{corollary}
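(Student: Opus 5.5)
The plan is to pull the semantic hypothesis back to NQ through $\Phi^{-1}$, apply the completeness half of Theorem~\ref{thm:NQ-corrected-completeness}, and then push the resulting NQ derivation forward to a QuLam typing derivation by Proposition~\ref{prop:derivation-correspondence}.

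\emph{Setting up the NQ side.} Write $\Gamma=\{x_1:\alpha_1,\ldots,x_n:\alpha_n\}$. Fix labels $u_i$ with $\Psi(u_i)=x_i$. Let $\Gamma^\ast$ be the labeled NQ context $\{u_i:\Phi^{-1}(\alpha_i)\}$. Two variables may share a type, so the underlying set of formulas of $\Gamma^\ast$ has at most $n$ elements, but its meet is unaffected.

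\emph{Transferring the semantic hypothesis.} Let $L$ be an orthomodular lattice and $\nu$ a valuation of propositional variables. Define $\rho=\nu\circ\Phi^{-1}$ on type variables, so that $\nu(p)=\rho(\Phi(p))$. By Lemma~\ref{lem:transport}, $\llbracket\Phi^{-1}(\beta)\rrbracket_\nu=\llbracket\beta\rrbracket_\rho$ for every type $\beta$, and the same holds for meets of finite sets. The hypothesis therefore gives $\llbracket\Gamma^\ast\rrbracket_\nu\leq\llbracket\Phi^{-1}(\alpha)\rrbracket_\nu$ for every $L$ and $\nu$. Since every $\nu$ arises in this way from some $\rho$, this covers all valuations.

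\emph{Completeness of NQ.} By Theorem~\ref{thm:NQ-corrected-completeness}, $\Phi^{-1}(\alpha)$ is derivable in NQ from the underlying formulas of $\Gamma^\ast$. We label the undischarged assumption occurrences using the labels $u_i$. Where a formula occurs under several labels, we choose any one of them. This yields a labeled derivation $\mathcal D$ whose undischarged assumptions form a subcontext $\Gamma'\subseteq\Gamma^\ast$. One point needs care here. The restriction on ${\rightarrow}\mathrm I$ concerns only the premise derivation of that rule, and those assumptions are discharged. Relabeling the undischarged assumptions therefore does not interfere with that restriction.

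\emph{Passing to QuLam.} By Proposition~\ref{prop:derivation-correspondence}, $\Psi(\Gamma')\vdash\Psi(\mathcal D):\Phi(\Phi^{-1}(\alpha))$ holds, and the type $\Phi(\Phi^{-1}(\alpha))$ is simply $\alpha$. Since $\Psi(\Gamma')\subseteq\Gamma$, Lemma~\ref{lem:weakening} gives $\Gamma\vdash\Psi(\mathcal D):\alpha$. We take $M=\Psi(\mathcal D)$.

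I expect the main obstacle to be bookkeeping rather than mathematics. It consists of matching the valuations in both directions and handling labels for duplicate types and unused assumptions. The latter are absorbed by weakening, and the former by the remark that NQ contexts may contain unused assumptions.
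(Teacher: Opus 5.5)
Your proposal is correct and follows essentially the same route as the paper. Both transport valuations via $\rho=\nu\circ\Phi^{-1}$ and Lemma~\ref{lem:transport}, apply the completeness half of Theorem~\ref{thm:NQ-corrected-completeness}, turn the resulting NQ derivation into a canonical typing derivation $\Gamma'\vdash M:\alpha$ with $\Gamma'\subseteq\Gamma$ by Proposition~\ref{prop:derivation-correspondence}, and finish with Lemma~\ref{lem:weakening}. Your explicit relabeling of undischarged assumptions is bookkeeping the paper leaves implicit, and it is sound: undischarged assumptions of the whole derivation cannot occur inside a ${\rightarrow}\mathrm I$ premise, nor inside the closed first premise of MT, so relabeling them does not disturb either restriction.
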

\begin{proof}
Let $L$ be an arbitrary orthomodular lattice, and let $\nu$ be a valuation of propositional variables in $L$. Since $\Phi$ is a bijection on variables, we have $\nu(p)=\rho(\Phi(p))$ if we set $\rho=\nu\circ\Phi^{-1}$. Hence, by Lemma~\ref{lem:transport}, the assumed inequality also holds for the NQ formulas corresponding to $\Gamma$ and $\alpha$. Since $L$ and $\nu$ are arbitrary, this inequality and the completeness of NQ (Theorem~\ref{thm:NQ-corrected-completeness}) imply that the formula corresponding to $\alpha$ is derivable in NQ from the assumptions corresponding to $\Gamma$. Next, using Proposition~\ref{prop:derivation-correspondence}, we transform such an NQ derivation into a canonical typing derivation $\Gamma'\vdash M:\alpha$ with $\Gamma'\subseteq\Gamma$. Finally, by Lemma~\ref{lem:weakening}, we derive $\Gamma\vdash M:\alpha$.
\end{proof}

\section{Structure of Terms}
\label{sec:control}

In this section, we examine the structure of terms in $\QuLam$ from the standpoint of the proof theory of NQ. We first discuss negative inference, double negation, and their corresponding term constructions. We then consider restricted abstraction, the use of product arguments, and a deduction principle based on the Sasaki projection.

\subsection{Negative Composition}
\label{sec:tau}

The constructor $\tau$ implements the modus tollens rule of $\mathrm{NQ}$. Its first argument is a closed implication term, while its second argument carries the available assumptions.

\begin{proposition}[Term assignment for modus tollens]
\label{prop:tau-mt}
If $\vdash M:\alpha\rightarrow\beta$ and $\Gamma\vdash N:\beta\rightarrow\bot$, then $\Gamma\vdash\tau(M,N):\alpha\rightarrow\bot$. Under the correspondence of Proposition~\ref{prop:derivation-correspondence}, this typing rule serves as the term assignment for the modus tollens rule of NQ.
\end{proposition}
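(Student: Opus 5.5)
The first assertion is immediate from the modus tollens clause of Definition~\ref{def:typed-lambda-terms}. That clause forms $\tau(M,N):\alpha\rightarrow\bot$ under $\Gamma$ exactly when two conditions hold: $M:\alpha\rightarrow\beta$ is a $\lambda$-term under $\varnothing$, and $N:\beta\rightarrow\bot$ is a $\lambda$-term under $\Gamma$. The hypotheses $\vdash M:\alpha\rightarrow\beta$ and $\Gamma\vdash N:\beta\rightarrow\bot$ are precisely these two premises, so a single application of the rule yields $\Gamma\vdash\tau(M,N):\alpha\rightarrow\bot$. As a consistency check, Lemma~\ref{lem:free-var} gives $\FV(\tau(M,N))=\FV(N)\subseteq\mathrm{dom}(\Gamma)$, since $M$ is closed. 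This matches the footnote to the definition of free variables.

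For the second assertion, I will unfold the canonical typing derivation of Proposition~\ref{prop:derivation-correspondence} at an MT node. Consider an NQ derivation ending in MT with premises $\mathcal D_1$ of $\alpha\rightarrow\beta$ and $\mathcal D_2$ of $\neg\beta$. The rule requires $\mathcal D_1$ to have no undischarged assumptions, so by the induction hypothesis of that proposition $\Psi(\mathcal D_1)$ is typed as $\vdash\Psi(\mathcal D_1):\Phi(\alpha)\rightarrow\Phi(\beta)$. Let $\Gamma$ be the undischarged assumptions of $\mathcal D_2$. Then $\Psi(\Gamma)\vdash\Psi(\mathcal D_2):\Phi(\beta)\rightarrow\bot$, using $\Phi(\neg\beta)=\Phi(\beta)\rightarrow\Phi(\bot)=\Phi(\beta)\rightarrow\bot$. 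The conclusion of MT inherits exactly $\Gamma$, and $\Phi(\neg\alpha)=\Phi(\alpha)\rightarrow\bot$. Hence the canonical assignment $\Psi(\mathrm{MT}(\mathcal D_1,\mathcal D_2))=\tau(\Psi(\mathcal D_1),\Psi(\mathcal D_2))$ is an instance of the typing rule in the first part, with $M=\Psi(\mathcal D_1)$ and $N=\Psi(\mathcal D_2)$.

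Conversely, take a canonical typing derivation whose last rule is $\tau$. Its first premise is typed under the empty environment. Applying $\Phi^{-1}$ and the inverse transformation of Proposition~\ref{prop:derivation-correspondence} therefore gives a closed NQ derivation of $\Phi^{-1}(\alpha)\rightarrow\Phi^{-1}(\beta)$, which satisfies the side condition of MT. So the $\tau$ rule is sent back to MT, and the two correspond under the bijection.

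The main point needing care is the environments. The closedness of the first premise must be matched on both sides. The second premise's environment must pass unchanged to the conclusion, as it does in both MT and the $\tau$ rule. Weakening should be needed only for unused discharged assumptions under ${\rightarrow}\mathrm I$, not here. I expect no real obstacle beyond verifying that $\Phi$ commutes with the abbreviation $\neg$.
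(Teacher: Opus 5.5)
Your proposal is correct and follows essentially the same route as the paper: the typing judgment is a single application of the $\tau$ formation rule, and the link to MT comes from matching the closed first premise and the inherited environment of the second premise under Proposition~\ref{prop:derivation-correspondence}. Your version is more explicit than the paper's, since it checks both directions of the bijection and verifies $\Phi(\neg\beta)=\Phi(\beta)\rightarrow\bot$, whereas the paper leaves these steps implicit.
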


\begin{proof}
We obtain this typing judgment directly from the formation rule for $\tau$ in Definition~\ref{def:typed-lambda-terms}. The premise $\vdash M:\alpha\rightarrow\beta$ corresponds to a closed proof of $\Phi^{-1}(\alpha)\rightarrow\Phi^{-1}(\beta)$, and the premise $\Gamma\vdash N:\beta\rightarrow\bot$ corresponds to a proof of $\Phi^{-1}(\beta)\rightarrow\bot$ under the assumptions represented by $\Gamma$. Therefore, $\tau(M,N)$ is the term corresponding to the conclusion of modus tollens in $\mathrm{NQ}$.
\end{proof}

Under the hypotheses of Proposition~\ref{prop:tau-mt}, if $\Delta\vdash K:\alpha$ and $\Gamma$ and $\Delta$ are consistent, we obtain $\Gamma\cup\Delta\vdash\tau(M,N)K:\bot$ by the formation rule for application.

\subsection{Double Negation}
\label{sec:eta-theta}

The constructors $\eta$ and $\theta$ correspond to the double-negation elimination and introduction rules of $\mathrm{NQ}$, respectively. For a type $\alpha$, we use $\neg\alpha$ as an abbreviation for $\alpha\rightarrow\bot$ and $\neg\neg\alpha$ as an abbreviation for $(\alpha\rightarrow\bot)\rightarrow\bot$. The constructor $\theta$ maps a term of type $\alpha$ to a term of type $\neg\neg\alpha$, and $\eta$ maps a term of type $\neg\neg\alpha$ to a term of type $\alpha$.\footnote{In continuation-passing style (CPS), the type $(\alpha\rightarrow\kappa)\rightarrow\kappa$ represents a computation with answer type $\kappa$. When $\kappa=\bot$, this type is $\neg\neg\alpha$. A related type structure occurs in Parigot's $\lambda\mu$-calculus~\cite{Pari1992}. Connections between classical logic and control operators are also studied in~\cite{Ario2003,Grif1990}.}

\begin{proposition}[Double-negation retraction]
\label{prop:dn-retraction}
For every type $\alpha$, the terms $\theta_\alpha:=\lambda x:\alpha.\,\theta(x)$ and $\eta_\alpha:=\lambda y:(\alpha\rightarrow\bot)\rightarrow\bot.\,\eta(y)$ have types $\alpha\rightarrow((\alpha\rightarrow\bot)\rightarrow\bot)$ and $((\alpha\rightarrow\bot)\rightarrow\bot)\rightarrow\alpha$, respectively. Moreover, for every term $M:\alpha$, we have $\eta_\alpha(\theta_\alpha M)\Rightarrow^* M$.
\end{proposition}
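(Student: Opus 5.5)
The typing claims follow directly from the formation rules in Definition~\ref{def:typed-lambda-terms}, and the conversion claim is a short three-step reduction.

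\emph{Typing of $\theta_\alpha$.} Under the singleton environment $\{x:\alpha\}$, the variable rule gives $x:\alpha$. The rule for double-negation introduction then gives $\theta(x):(\alpha\rightarrow\bot)\rightarrow\bot$. Because the body is typed under exactly one declaration, namely that of the bound variable, the abstraction rule applies. It yields the closed term $\theta_\alpha$ of type $\alpha\rightarrow((\alpha\rightarrow\bot)\rightarrow\bot)$, and by the footnote to the abstraction rule this term may be typed under any ambient environment.

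\emph{Typing of $\eta_\alpha$.} Symmetrically, we type $y$ under $\{y:(\alpha\rightarrow\bot)\rightarrow\bot\}$ and apply the rule for double-negation elimination to obtain $\eta(y):\alpha$. Abstraction then gives $\eta_\alpha$ the stated type. Now let $\Gamma\vdash M:\alpha$. Weakening $\theta_\alpha$ and $\eta_\alpha$ to $\Gamma$ (Lemma~\ref{lem:weakening}) and using the application rule twice, with the consistent union $\Gamma\cup\Gamma=\Gamma$, shows that $\eta_\alpha(\theta_\alpha M)$ is a well-typed term of type $\alpha$ under $\Gamma$.

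\emph{The reduction.} We exhibit an explicit sequence of one-step conversions from Definition~\ref{Conv}, applying the conversion to a subterm where necessary. First, the inner redex $\theta_\alpha M=(\lambda x.\theta(x))M$ is a subterm, and beta conversion turns it into $\theta(x)[x:=M]=\theta(M)$. Hence
\[
\eta_\alpha(\theta_\alpha M)\Rightarrow\eta_\alpha(\theta(M)).
\]
Second, beta conversion at the root gives
\[
\eta_\alpha(\theta(M))=(\lambda y.\eta(y))(\theta(M))\Rightarrow\eta(y)[y:=\theta(M)]=\eta(\theta(M)).
\]
Third, the double-negation conversion gives $\eta(\theta(M))\Rightarrow M$. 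Chaining the three steps yields $\eta_\alpha(\theta_\alpha M)\Rightarrow^{*}M$. By Proposition~\ref{prop:subject-reduction}, each intermediate term keeps type $\alpha$ under $\Gamma$.

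There is no real obstacle. The one point that needs care is that $M$ may have free variables. This does no harm: $\theta_\alpha$ and $\eta_\alpha$ are closed (Lemma~\ref{lem:free-var}), so substituting $M$ for $x$ or $\theta(M)$ for $y$ involves no variable capture. Nor do these substitutions interact with the singleton-environment restriction, since that restriction governs only the typing of the abstraction bodies, which is checked before any application. The conversions themselves are defined on terms regardless of environments.
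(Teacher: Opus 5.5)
Your proposal is correct and follows the paper's proof: the typings come directly from the formation rules for $\theta$, $\eta$ and the restricted abstraction, and the reduction consists of beta conversion to $\eta(\theta(M))$ followed by the double-negation conversion $\eta(\theta(M))\Rightarrow M$. The extra details you add, namely the order of the two beta steps and the typing of $\eta_\alpha(\theta_\alpha M)$ under $\Gamma$, are fine. One small correction: variable capture is impossible here because the bodies $\theta(x)$ and $\eta(y)$ contain no binders, not because $\theta_\alpha$ and $\eta_\alpha$ are closed.
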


\begin{proof}
We obtain the typing of $\theta_\alpha$ and $\eta_\alpha$ from the formation rules for $\theta$, $\eta$, and $\lambda$-abstraction. For the conversion claim, we obtain $\eta_\alpha(\theta_\alpha M)\Rightarrow^*\eta(\theta(M))$ by ordinary beta conversion and $\eta(\theta(M))\Rightarrow M$ by the double-negation conversion rule.
\end{proof}

The double-negation pair behaves as follows in $\QuLam$. The constructor $\theta$ embeds a term into its double-negation type, and $\eta$ returns it to its original type. The composite $\eta_\alpha\circ\theta_\alpha$ reduces to the identity on terms of the given type.

We can also combine the constructors $\tau$ and $\eta$ as follows. Suppose $\vdash M:\alpha\rightarrow\beta$ and $\vdash N:\alpha$, and let $x$ be a variable of type $\beta\rightarrow\bot$. Then $x:\beta\rightarrow\bot\vdash\tau(M,x)N:\bot$, and hence $\vdash\eta(\lambda x.\tau(M,x)N):\beta$. We obtain this typing from Proposition~\ref{prop:tau-mt} and the formation rules for application, abstraction, and double-negation elimination.

\subsection{Restrictions on Dependencies}
\label{sec:design}

The restriction on $\lambda$-abstraction governs the structure of terms in $\QuLam$. The implication introduction rule of $\mathrm{NQ}$ allows a proof of $\alpha\rightarrow\beta$ to be constructed from a derivation of $\beta$ under the single assumption $\alpha$. Under the Curry--Howard correspondence, we obtain the term rule by which $\lambda x.M$ is constructed from a derivation of $M:\beta$ under the singleton environment $\{x:\alpha\}$. Therefore, every $\lambda$-abstraction is a self-contained proof fragment whose body depends only on its explicit argument.

\begin{proposition}[Explicit dependency principle]
\label{prop:explicit-dependency}
Let $\lambda x.M$ be a well-typed abstraction of type $\alpha\rightarrow\beta$ in QuLam. Then $\FV(M)\subseteq\{x\}$ and $\FV(\lambda x.M)=\varnothing$. In particular, variables other than $x$ cannot occur free in $M$. Instead, we can combine variables from the surrounding environment into a single product argument and access its components through projections, as in Proposition~\ref{prop:tuple-abstraction}. Closed terms may also occur within the body.
\end{proposition}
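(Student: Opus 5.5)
The statement is essentially a restatement of the second half of Lemma~\ref{lem:free-var}, so the plan is to reduce to that lemma. Then I will check that the remaining sentences are informal remarks that need no separate proof beyond pointing to later constructions.

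First I will analyse how a judgment $\Gamma\vdash\lambda x.M:\alpha\rightarrow\beta$ can be derived. Inspecting Definition~\ref{def:typed-lambda-terms}, the only formation rule whose conclusion has a $\lambda$ at the root is the abstraction rule. Every other rule produces a term with a different outermost constructor: variable, $\varepsilon$, $\pi_i$, pairing, application, $\tau$, $\eta$, $\theta$, $\zeta$ or $\xi$. So the last step of any typing derivation of $\lambda x.M$ must be abstraction, and its premise is $\{x:\alpha\}\vdash M:\beta$. The ambient environment $\Gamma$ of the conclusion plays no role in the premise, as the footnote to the abstraction rule explains.

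Next I apply the first part of Lemma~\ref{lem:free-var} to this premise, which gives $\FV(M)\subseteq\mathrm{dom}(\{x:\alpha\})=\{x\}$. The clause $\FV(\lambda x.M)=\FV(M)\setminus\{x\}$ then yields $\FV(\lambda x.M)=\varnothing$. It follows at once that no variable $y\neq x$ occurs free in $M$.

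The remaining two sentences are not assertions that need proof; they describe how to cope with the restriction. For the product-argument remark I will simply forward-reference Proposition~\ref{prop:tuple-abstraction}. For closed subterms, I note that a closed term $K$ typed under $\varnothing$ can be weakened to $\{x:\alpha\}$ by Lemma~\ref{lem:weakening}, and this weakening does not enlarge $\FV(M)$. The first argument of $\tau$ is a built-in instance of this.

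The only point needing care is the inversion step: confirming that no rule other than abstraction types a $\lambda$-rooted term, given that terms are identified up to $\alpha$-conversion. This is a routine case check on the rule list, so I expect no real obstacle.
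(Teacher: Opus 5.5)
Your proposal is correct and follows the paper's argument. The abstraction rule types the body under the singleton environment $\{x:\alpha\}$, so $\FV(M)\subseteq\{x\}$, and binding $x$ gives $\FV(\lambda x.M)=\varnothing$; your explicit inversion step and your appeal to Lemma~\ref{lem:free-var} spell out what the paper's two-line proof leaves implicit.
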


\begin{proof}
Since $M:\beta$ is derived under the singleton environment $\{x:\alpha\}$ by the formation rule for $\lambda$-abstraction, the only possible free variable of $M$ is $x$. Therefore, by binding $x$, we obtain $\FV(\lambda x.M)=\varnothing$.
\end{proof}

The same principle also applies to $\tau$. In the formation rule for $\tau(M,N)$, the implication term $M:\alpha\rightarrow\beta$ is closed, and the environment of the conclusion is inherited from the negative premise $N:\beta\rightarrow\bot$. Therefore, the positive implication from $\alpha$ to $\beta$ is a self-contained term, while the surrounding assumptions are concentrated in the negative part of the construction. This expresses, at the level of terms, the corresponding restriction in the modus tollens rule of $\mathrm{NQ}$. In contrast, a term $\zeta(M)$ of implication type may have free variables inherited from its premise $M$, as permitted by Definition~\ref{def:typed-lambda-terms}.

\subsection{Abstraction over Product Arguments}
\label{sec:product-abstraction}

Although $\QuLam$ does not permit unrestricted abstraction over a surrounding environment, it permits several assumptions to be combined into a single product argument.

\begin{proposition}[Abstraction over a pair]
\label{prop:tuple-abstraction}
Suppose $x:\alpha,y:\beta\vdash M:\gamma$, where $x$ and $y$ are distinct. If $z$ is fresh, then the closed term $\lambda z.\,M[x:=\pi_1 z,y:=\pi_2 z]$ has type $(\alpha\times\beta)\rightarrow\gamma$. Here the substitutions are performed simultaneously and are assumed to avoid variable capture.
\end{proposition}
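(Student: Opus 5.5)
The plan is to obtain the typing of the body under the singleton environment $\{z:\alpha\times\beta\}$ directly from the Substitution Lemma~\ref{lem:substitution}, and then close it off with the abstraction rule. Concretely, we set $\Delta=\{z:\alpha\times\beta\}$. Since $z$ is fresh, it differs from $x$ and $y$, so $\Delta$ is a well-formed environment. By the projection rule of Definition~\ref{def:typed-lambda-terms}, applied to the variable judgment $\Delta\vdash z:\alpha\times\beta$, we obtain $\Delta\vdash\pi_1 z:\alpha$ and $\Delta\vdash\pi_2 z:\beta$.

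We then invoke Lemma~\ref{lem:substitution} with $k=2$, $x_1=x$, $x_2=y$, $N_1=\pi_1 z$, $N_2=\pi_2 z$. The hypothesis $x:\alpha,y:\beta\vdash M:\gamma$ is exactly the required typing of $M$ under $\{x_1:\alpha_1,x_2:\alpha_2\}$, and the variables are distinct by assumption. The lemma yields
$$\Delta\vdash M[x:=\pi_1 z,y:=\pi_2 z]:\gamma,$$
with the substitution simultaneous and capture-avoiding. Freshness of $z$ guarantees that no bound variable of $M$ captures $z$. In any case, bound variables occur only inside abstractions, which are closed by Lemma~\ref{lem:free-var} and left unchanged by the substitution.

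Finally, we apply the abstraction rule to this judgment, whose environment is exactly the singleton $\{z:\alpha\times\beta\}$. This gives $\lambda z.\,M[x:=\pi_1 z,y:=\pi_2 z]:(\alpha\times\beta)\rightarrow\gamma$ under any ambient environment, in particular under $\varnothing$. Closedness follows from Lemma~\ref{lem:free-var}, or equivalently from Proposition~\ref{prop:explicit-dependency}.

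The only delicate point is making sure the environment of the body is literally the singleton required by the abstraction rule, rather than something larger such as $\{x:\alpha,y:\beta,z:\alpha\times\beta\}$. This is why I use the Substitution Lemma, which lands in the common environment $\Delta$ of the substituted terms, instead of weakening followed by some other argument. If $M$ happens not to use $x$ or $y$, the "subenvironment" clause of Lemma~\ref{lem:substitution} handles that case without change.
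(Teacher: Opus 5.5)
Your proposal is correct and matches the paper's proof: you apply Lemma~\ref{lem:substitution} with the common environment $\{z:\alpha\times\beta\}$ to type the substituted body, then close it off with the restricted abstraction rule. Your extra checks, typing $\pi_1 z$ and $\pi_2 z$ by the projection rule and deriving closedness from Lemma~\ref{lem:free-var}, make explicit steps that the paper leaves implicit.
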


\begin{proof}
By Lemma~\ref{lem:substitution}, we obtain a derivation of $\{z:\alpha\times\beta\}\vdash M[x:=\pi_1z,y:=\pi_2z]:\gamma$ by simultaneously replacing $x:\alpha$ and $y:\beta$ with $\pi_1z:\alpha$ and $\pi_2z:\beta$, respectively. We then obtain the indicated closed term by the abstraction rule.
\end{proof}

This construction can be extended to any finite list of two or more assumptions by using right-associated products. In the case of a single assumption, we use the ordinary restricted abstraction rule.

Next, we consider the relation corresponding semantically to Proposition~\ref{prop:tuple-abstraction}. Let $a$, $b$, and $c$ be the interpretations of $\alpha$, $\beta$, and $\gamma$, respectively. From Lemma~\ref{lem:sasaki-adjunction} and $\varphi_d(1)=d$, it follows that $d\leq c$ and $d\rightarrow_S c=1$ are equivalent for every element $d$. Putting $d=a\wedge b$, we see that $a\wedge b\leq c$ and $(a\wedge b)\rightarrow_S c=1$ are equivalent.

We can also perform uncurrying by using product arguments. Indeed, if $M$ is a closed term of type $\alpha\rightarrow(\beta\rightarrow\gamma)$, then, for a variable $z$ of type $\alpha\times\beta$, $\pi_1z$ and $\pi_2z$ have types $\alpha$ and $\beta$, respectively. Therefore, $(M(\pi_1z))(\pi_2z)$ has type $\gamma$. Since $M$ is closed, the only free variable of this term is $z$. Hence, by the restricted abstraction rule, we obtain the closed term $U_M:=\lambda z.\,(M(\pi_1z))(\pi_2z)$ of type $(\alpha\times\beta)\rightarrow\gamma$.

\subsection{Failure of Ordinary Currying}
\label{sec:currying-failure}

The converse of uncurrying, namely ordinary currying, is not available in general. In particular, $\lambda$-terms such as $\lambda x.\lambda y.x$ and $\lambda x.\lambda y.\langle x,y\rangle$ cannot be typed under the abstraction rule of $\QuLam$, since $x$ occurs free in their inner abstractions.

The semantic fact corresponding to this failure is as follows. Let $\mathrm{MO}_2$ be the orthomodular lattice with six elements, consisting of $0$, $1$, and four pairwise incomparable elements $a$, $a^\perp$, $b$, and $b^\perp$. The pairs $\{a,a^\perp\}$ and $\{b,b^\perp\}$ belong to different Boolean subortholattices with four elements. If we interpret the type variables $\alpha$ and $\beta$ by $a$ and $b$, respectively, then, since $a\wedge b=0$ and $a\nleq b^\perp$, we have $b\rightarrow_S a=b^\perp\vee(b\wedge a)=b^\perp$ and $a\nleq b\rightarrow_S a$. Therefore, $\alpha\rightarrow(\beta\rightarrow\alpha)$ is not valid in all orthomodular lattices. By Theorem~\ref{thm:soundness}, no closed QuLam term has this type. Moreover, since $a\wedge b\leq a$ but $a\nleq b\rightarrow_S a$, implication does not furnish an exponential right adjoint to the product.\footnote{Malinowski~\cite{Mali1990} established a general obstruction to an ordinary deduction theorem for consequence relations determined by classes of orthomodular lattices.}

\subsection{Sasaki Projection}
\label{sec:sasaki-projection}

A different form of deduction is available through the Sasaki projection. In what follows, we formulate this deduction principle. For types $\alpha$ and $\gamma$, let $\varphi_\alpha(\gamma):=\alpha\times\neg(\alpha\times\neg\gamma)$. Under the interpretation introduced in Definition~\ref{def:interpretation}, fix a valuation in an orthomodular lattice, and write $a=\llbracket\alpha\rrbracket$ and $c=\llbracket\gamma\rrbracket$. Then $\llbracket\varphi_\alpha(\gamma)\rrbracket=a\wedge(a^\perp\vee c)$ is the Sasaki projection $\varphi_a(c)$ of $c$ onto $a$.

\begin{proposition}[Deduction using the Sasaki projection]
\label{prop:sasaki-projected-deduction}
Suppose $x:\varphi_\alpha(\gamma)\vdash M:\beta$. Then there exists a term of type $\alpha\rightarrow\beta$ under the single assumption $y:\gamma$. More precisely, we choose a fresh variable $z$ and put $K_M:=\lambda z.\,(\zeta(\tau(\lambda x.\langle\pi_1x,M\rangle,\pi_2z)))(\pi_1z)$. Since $M$ has no free variables other than $x$, $K_M$ is a closed term of type $(\alpha\times\neg(\alpha\times\beta))\rightarrow\neg\gamma$, and $y:\gamma\vdash\zeta(\tau(K_M,\theta(y))):\alpha\rightarrow\beta$ holds.
\end{proposition}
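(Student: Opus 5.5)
The plan is to verify the typing of $K_M$ from the inside out using only the formation rules of Definition~\ref{def:typed-lambda-terms}, and then to assemble the outer term with one application each of $\theta$, $\tau$ and $\zeta$. Throughout, recall that $\varphi_\alpha(\gamma)=\alpha\times\neg(\alpha\times\neg\gamma)$ and $\neg\delta$ abbreviates $\delta\rightarrow\bot$.

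First I type the closed function inside $\tau$. Under $x:\varphi_\alpha(\gamma)$ we have $\pi_1x:\alpha$ by projection and $M:\beta$ by hypothesis, so pairing (with the consistent union $\{x\}\cup\{x\}$) gives $x:\varphi_\alpha(\gamma)\vdash\langle\pi_1x,M\rangle:\alpha\times\beta$. Since this body is typed under the singleton environment $\{x:\varphi_\alpha(\gamma)\}$, the abstraction rule gives $\vdash\lambda x.\langle\pi_1x,M\rangle:\varphi_\alpha(\gamma)\rightarrow(\alpha\times\beta)$. The hypothesis that $M$ has no free variable other than $x$ is exactly what the abstraction rule needs; by Lemma~\ref{lem:free-var} this term is closed.

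Next I work under $z:\alpha\times\neg(\alpha\times\beta)$. There $\pi_2z:\neg(\alpha\times\beta)$. Since the first argument is closed, the $\tau$ rule yields $z:\ldots\vdash\tau(\lambda x.\langle\pi_1x,M\rangle,\pi_2z):\neg\varphi_\alpha(\gamma)$. This type is literally $\neg(\alpha\times\neg(\alpha\times\neg\gamma))$, which is the premise shape of the $\zeta$ rule with consequent $\neg\gamma$. Hence $\zeta(\ldots):\alpha\rightarrow\neg\gamma$ under $\{z\}$. Applying this to $\pi_1z:\alpha$ gives a term of type $\neg\gamma$ whose only free variable is $z$. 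The abstraction rule then gives $\vdash K_M:(\alpha\times\neg(\alpha\times\beta))\rightarrow\neg\gamma$, which is closed.

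Finally, under $y:\gamma$ we have $\theta(y):\neg\neg\gamma$. Since $K_M$ is closed, the $\tau$ rule applies with $K_M$ as the closed premise and $\theta(y)$ as the negative premise. This gives $y:\gamma\vdash\tau(K_M,\theta(y)):\neg(\alpha\times\neg(\alpha\times\beta))$, and one more application of $\zeta$ produces type $\alpha\rightarrow\beta$ under $\{y:\gamma\}$.

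The only delicate point is matching the syntactic shape of the types exactly at the two uses of $\zeta$, since $\zeta$ has no flexibility. The key observation is that $\neg\varphi_\alpha(\gamma)$ coincides with the $\zeta$-premise for the pair $(\alpha,\neg\gamma)$. A secondary point is that both abstractions are formed under genuinely singleton environments, which holds because every embedded abstraction is closed.
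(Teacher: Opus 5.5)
Your proposal is correct and follows the paper's proof step for step. You type the closed term $\lambda x.\langle\pi_1x,M\rangle:\varphi_\alpha(\gamma)\rightarrow(\alpha\times\beta)$, combine it with $\pi_2z$ via $\tau$, apply $\zeta$ at consequent $\neg\gamma$, apply to $\pi_1z$, abstract over $z$ to obtain $K_M$, and finish with $\theta(y)$, $\tau$, and $\zeta$ under $y:\gamma$; your explicit check that $\neg\varphi_\alpha(\gamma)$ literally matches the $\zeta$-premise for $(\alpha,\neg\gamma)$ is the same type identification the paper writes out.
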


\begin{proof}
Under the assumption $x:\varphi_\alpha(\gamma)$, the pair $\langle\pi_1x,M\rangle$ has type $\alpha\times\beta$. Therefore, $F_M:=\lambda x.\langle\pi_1x,M\rangle$ is a closed term of type $\varphi_\alpha(\gamma)\rightarrow(\alpha\times\beta)$. Now assume $z:\alpha\times\neg(\alpha\times\beta)$. The second projection $\pi_2z$ has type $\neg(\alpha\times\beta)$, and therefore $\tau(F_M,\pi_2z)$ has type $\neg\varphi_\alpha(\gamma)=\neg(\alpha\times\neg(\alpha\times\neg\gamma))$. Applying $\zeta$, we obtain a term of type $\alpha\rightarrow\neg\gamma$ under $z$. When we apply this term to $\pi_1z:\alpha$, the resulting term has type $\neg\gamma$. Therefore, by abstraction over $z$, we can construct $K_M:(\alpha\times\neg(\alpha\times\beta))\rightarrow\neg\gamma$ with no free variables. Moreover, by the double-negation introduction rule, $y:\gamma\vdash\theta(y):\neg\neg\gamma$ holds. By modus tollens, $\tau(K_M,\theta(y))$ has type $\neg(\alpha\times\neg(\alpha\times\beta))$ under $y:\gamma$. Furthermore, applying $\zeta$, we obtain a term of type $\alpha\rightarrow\beta$ under the same assumption.
\end{proof}

We put $b=\llbracket\beta\rrbracket$. By soundness, we obtain $\varphi_a(c)\leq b$ from the hypothesis of Proposition~\ref{prop:sasaki-projected-deduction} and $c\leq a\rightarrow_S b$ from its conclusion. These inequalities correspond to the forward direction of Lemma~\ref{lem:sasaki-adjunction}.\footnote{The hypothesis of Proposition~\ref{prop:sasaki-projected-deduction} has the same semantic condition as Kornell's ordered sequent $\gamma,\alpha\vdash\beta$. In SOM~\cite{Korn2024}, this sequent is interpreted as $c\mathbin{\&}a\leq b$, where $c\mathbin{\&}a=a\wedge(a^\perp\vee c)$.} If $a$ and $c$ commute in an orthomodular lattice, the projection $a\wedge(a^\perp\vee c)$ equals $a\wedge c$. Therefore, for commuting $a$ and $c$, this deduction principle has a premise interpreted by their meet.

\section{Conclusion}\label{sec:conclusion}

In this paper, we have refined the natural deduction system NQ for quantum logic and its corresponding typed lambda calculus QuLam, presented in our previous paper, and introduced rules for the Sasaki hook and the corresponding term constructors $\zeta$ and $\xi$. Furthermore, for these extended systems, we have established the soundness and completeness of NQ, the Curry--Howard correspondence between the two systems, the strong normalization of QuLam, and the soundness and completeness of typing judgments.

In QuLam, the body of a $\lambda$-abstraction must be typed under a single assumption, and hence ordinary currying is not available in general. However, we can still perform $\lambda$-abstraction by combining multiple assumptions into a single product argument, uncurry closed terms, and realize deduction based on the Sasaki projection as a term.

\section*{Acknowledgments}

This work was supported by JSPS KAKENHI Grant Number JP24K03372.


\begin{thebibliography}{99}

\bibitem{Alte2005}
T. Altenkirch and J. Grattage,
A functional quantum programming language,
in \emph{20th Annual IEEE Symposium on Logic in Computer Science (LICS'05)},
IEEE, 2005, 249--258.
\href{https://doi.org/10.1109/LICS.2005.1}{\nolinkurl{doi:10.1109/LICS.2005.1}}.

\bibitem{Ario2003}
Z. M. Ariola and H. Herbelin,
Minimal classical logic and control operators,
in \emph{Automata, Languages and Programming},
Lecture Notes in Computer Science, Vol. 2719,
Springer, Berlin and Heidelberg, 2003, 871--885.
\href{https://doi.org/10.1007/3-540-45061-0_68}{\nolinkurl{doi:10.1007/3-540-45061-0_68}}.

\bibitem{Birk1936}
G. Birkhoff and J. von Neumann,
The Logic of Quantum Mechanics,
\emph{Annals of Mathematics} 37(4) (1936), 823--843.
\href{https://doi.org/10.2307/1968621}{\nolinkurl{doi:10.2307/1968621}}.

\bibitem{Coec2004}
B. Coecke and S. Smets,
The Sasaki hook is not a [static] implicative connective but induces a backward [in time] dynamic one that assigns causes,
\emph{International Journal of Theoretical Physics} 43(7--8) (2004), 1705--1736.
\href{https://doi.org/10.1023/B:IJTP.0000048815.92983.6e}{\nolinkurl{doi:10.1023/B:IJTP.0000048815.92983.6e}}.

\bibitem{Gira1989}
J.-Y. Girard,
\emph{Proofs and Types}, translated and with appendices by P. Taylor and Y. Lafont,
Cambridge Tracts in Theoretical Computer Science, Vol. 7,
Cambridge University Press, Cambridge, 1989.

\bibitem{Grif1990}
T. G. Griffin,
A formulae-as-type notion of control,
in \emph{Proceedings of the 17th ACM SIGPLAN-SIGACT Symposium on Principles of Programming Languages},
ACM Press, New York, 1990, 47--58.
\href{https://doi.org/10.1145/96709.96714}{\nolinkurl{doi:10.1145/96709.96714}}.

\bibitem{Hard1979}
G. M. Hardegree,
The conditional in abstract and concrete quantum logic,
in C. A. Hooker, ed., \emph{The Logico-Algebraic Approach to Quantum Mechanics, Volume II: Contemporary Consolidation},
D. Reidel, Dordrecht, 1979, 49--108.

\bibitem{Hard1981}
G. M. Hardegree,
Material implication in orthomodular (and Boolean) lattices,
\emph{Notre Dame Journal of Formal Logic} 22(2) (1981), 163--182.
\href{https://doi.org/10.1305/ndjfl/1093883401}{\nolinkurl{doi:10.1305/ndjfl/1093883401}}.

\bibitem{Howa1980}
W. A. Howard,
The formulae-as-types notion of construction,
in J. R. Hindley and J. P. Seldin, eds., \emph{To H.B. Curry: Essays on Combinatory Logic, Lambda Calculus and Formalism},
Academic Press, London, 1980, 479--490.

\bibitem{Kalm1983}
G. Kalmbach,
\emph{Orthomodular Lattices},
Academic Press, London and New York, 1983.

\bibitem{Korn2024}
A. Kornell,
A natural deduction system for orthomodular logic,
\emph{The Review of Symbolic Logic} 17(3) (2024), 910--949.
\href{https://doi.org/10.1017/S1755020323000229}{\nolinkurl{doi:10.1017/S1755020323000229}}.

\bibitem{Mali1990}
J. Malinowski,
The deduction theorem for quantum logic---some negative results,
\emph{The Journal of Symbolic Logic} 55(2) (1990), 615--625.
\href{https://doi.org/10.2307/2274651}{\nolinkurl{doi:10.2307/2274651}}.

\bibitem{Nish1980}
H. Nishimura,
Sequential method in quantum logic,
\emph{The Journal of Symbolic Logic} 45(2) (1980), 339--352.
\href{https://doi.org/10.2307/2273194}{\nolinkurl{doi:10.2307/2273194}}.

\bibitem{Nish2009}
H. Nishimura,
Gentzen methods in quantum logic,
in K. Engesser, D. M. Gabbay, and D. Lehmann, eds.,
\emph{Handbook of Quantum Logic and Quantum Structures: Quantum Logic},
Elsevier, Amsterdam, 2009, 227--260.
\href{https://doi.org/10.1016/B978-0-444-52869-8.50009-8}{\nolinkurl{doi:10.1016/B978-0-444-52869-8.50009-8}}.

\bibitem{Pari1992}
M. Parigot,
$\lambda\mu$-calculus: an algorithmic interpretation of classical natural deduction,
in A. Voronkov, ed., \emph{Logic Programming and Automated Reasoning},
Lecture Notes in Computer Science, Vol. 624,
Springer, Berlin and Heidelberg, 1992, 190--201.
\href{https://doi.org/10.1007/BFb0013061}{\nolinkurl{doi:10.1007/BFb0013061}}.

\bibitem{Seli2005}
P. Selinger and B. Valiron,
A lambda calculus for quantum computation with classical control,
in \emph{Typed Lambda Calculi and Applications},
Lecture Notes in Computer Science, Vol. 3461,
Springer, Berlin and Heidelberg, 2005, 354--368.
\href{https://doi.org/10.1007/11417170_26}{\nolinkurl{doi:10.1007/11417170_26}}.

\bibitem{Sore2006}
M. H. S{\o}rensen and P. Urzyczyn,
\emph{Lectures on the Curry--Howard Isomorphism},
Studies in Logic and the Foundations of Mathematics, Vol. 149,
Elsevier, Amsterdam, 2006.

\bibitem{Toku2022}
K. Tokuo,
Natural deduction for quantum logic,
\emph{Logica Universalis} 16(3) (2022), 469--497.
\href{https://doi.org/10.1007/s11787-022-00307-7}{\nolinkurl{doi:10.1007/s11787-022-00307-7}}.

\bibitem{Vant2004}
A. Van Tonder,
A lambda calculus for quantum computation,
\emph{SIAM Journal on Computing} 33(5) (2004), 1109--1135.
\href{https://doi.org/10.1137/S0097539703432165}{\nolinkurl{doi:10.1137/S0097539703432165}}.

\end{thebibliography}
\end{document}